\newtheorem{theorem}{Theorem}
\newtheorem{lemma}[theorem]{Lemma}
\newtheorem{definition}[theorem]{Definition}
\newtheorem{example}[theorem]{Example}
\newtheorem{proposition}[theorem]{Proposition}
\newtheorem{assumption}[theorem]{Assumption}
\newcommand{\ie}{{\em i.e.}}
\newcommand{\poly}{{\em poly}}
\DeclareMathOperator*{\argmax}{argmax}
\DeclareMathOperator*{\argmin}{argmin}
\begin{document}
\title{A Faster FPTAS for Knapsack Problem With Cardinality Constraint}

\author{Wenxin Li \\
 Department of ECE\\
 The Ohio State University\\
{\tt wenxinliwx.1@gmail.com}\\
{\tt li.7328@osu.edu}
\and
Joohyun Lee\\
Division of Electrical Engineering\\
Hanyang University\\
{\tt joohyunlee@hanyang.ac.kr}
%\and
%Ness Shroff \\
%Department of ECE and CSE\\
%The Ohio State University\\
%{\tt shroff.11@osu.edu}
}
\begin{titlepage}

\maketitle

\begin{abstract}
We study the $K$-item knapsack problem (\ie, $1.5$-dimensional knapsack problem), a generalization of the famous 0-1 knapsack problem (\ie, $1$-dimensional knapsack problem) in which an upper bound $K$ is imposed on the number of items selected. This problem is of fundamental importance and is known to have a broad range of applications in various fields. It is well known that, there is no \emph{fully polynomial time approximation scheme} (FPTAS) for the $d$-dimensional knapsack problem when $d\geq 2$, unless P $=$ NP. While the $K$-item knapsack problem is known to admit an FPTAS, the complexity of all existing FPTASs have a high dependency on the cardinality bound $K$ and approximation error $\varepsilon$, which could result in inefficiencies especially when $K$ and $\varepsilon^{-1}$ increase. The current best results are due to [Mastrolilli and Hutter, 2006], in which two schemes are presented exhibiting a space-time tradeoff--one scheme with time complexity $O(n+Kz^{2}/\varepsilon^{2})$ and space complexity $O(n+z^{3}/\varepsilon)$, and another scheme that requires a run-time of $O(n+(Kz^{2}+z^{4})/\varepsilon^{2})$ but only needs $O(n+z^{2}/\varepsilon)$ space, where $z=\min\{K,1/\varepsilon\}$.

In this paper we close the space-time tradeoff exhibited in [Mastrolilli and Hutter, 2006] by designing a new FPTAS with a running time of $\widetilde{O}(n+z^{2}/\varepsilon^{2})$, while simultaneously reaching a  space complexity\footnote{$\widetilde{O}$ notation hides terms poly-logarithmic in $n$ and $1/\varepsilon$.} of $O(n+z^{2}/\varepsilon)$. Our scheme provides $\widetilde{O}(K)$ and $O(z)$ improvements on the state-of-the-art algorithms in time and space complexity respectively, and is the \emph{first} scheme that achieves a running time that is \emph{independent} of the cardinality bound $K$ (up to logarithmic factors) under fixed $\varepsilon$. Another salient feature of our algorithm is that it is the \emph{first} FPTAS that achieves better time and space complexity bounds than the very first standard FPTAS \emph{over all parameter regimes}.

%It is important to remark that our result suggests that, an $(1-\varepsilon)$-approximate solution of 
%
% $1.5$-dimensional knapsack problem can be solved within comparable time with $1$-dimensional knapsack problem
%
%$O(n+f(\varepsilon^{-1}))$ 

\end{abstract}
%We also highlight a few important candidate applications where our new scheme could be applied yielding significant performance improvements.
\end{titlepage}

% \footnote{Throughout this paper, the notion of approximate solution we will mainly focus on is FPTAS.} 
%\footnote{Throughout this paper, the notion of approximate solution we will mainly focus on is that of FPTAS, which is formally defined in Section~\ref{presection}.}

% Acknowledgments---Will not appear in anonymized version
\section{Introduction}
The famous \emph{0-1 knapsack problem} (0-1 KP), also known as the \emph{binary knapsack problem} (BKP), is a classical combinatorial optimization problem which often arises when there are resources to be allocated within a budget. In addition, the 0-1 knapsack problem can be also viewed as the most fundamental non-trivial \emph{integer linear programming} (ILP) problem, and can be formally formulated as follows:
\begin{align}
 &\max{\sum_{i\in E}{p_{i} x_{i}}},\label{kpdef1}\\
s.t. &\sum_{i\in E}{w_{i}x_{i}}\leq W  \mbox{ and } x_{i}\in \{0,1\}.\label{kpdef2}
\end{align}
The value and size of each item $i$ is called \emph{profit} $(p_i)$ and \emph{weight} $(w_i$) respectively. For any positive integer $m$, let $[m]=\{1,2,\ldots, m\}$, we use set $E=[n]$ to denote the ground set, which includes all possible items. Our goal is to make a binary choice for each item $i$ to maximize the overall profit subject to a budget constraint $W$. Beyond this basic model, there are several extensions and variations of 0-1 KP, readers are referred to \cite{kellerer2003knapsack} for details.

In this paper, we study the \emph{$K$-item knapsack problem} ($K$KP), a well known generalization of the famous 0-1 KP that can be formulated as (\ref{kpdef1})-(\ref{kpdef2}) with the additional constraint $\sum_{i\in E}{x_{i}}\leq K$, which means that the number of items in any feasible solutions is upper bounded by $K$. The $K$KP can be cast as a special case of the \emph{two-dimensional knapsack problem}, which is a knapsack problem with two different packing constraints. Hence $K$KP problem can also be interpreted as \emph{$1.5$-dimensional knapsack problem} ($1.5$-KP)~\cite[p.~269]{kellerer2003knapsack}. Another closely related problem is the \emph{exact $K$-item knapsack problem} (E-$K$KP), for which the results in this paper still hold and discussions are included in \ref{appendixekkp}.

The $K$KP (and E-$K$KP) represents many practical applications in various fields ranging from \emph{assortment planning}~\cite{desir2016assortment} to \emph{multiprocessor task scheduling}~\cite{caprara2000approximation}, and \emph{crowdsourcing}~\cite{wu2015hear}. For example, the worker selection problem in crowdsourcing systems~\cite{Gong:2018:ITD:3209582.3209599}, \ie, maximizing opinion diversity in constructing a wise crowd, can be reduced to E-$K$KP. On the other hand, $K$KP also appears as a key subproblem in the solutions of several more complicated problems~\cite{aardal2015approximation,ahuja2004multi,epstein2012bin,jansen2006preemptiveresource,martello1999dynamic}. For example, in the \emph{bin packing} problem~\cite{epstein2012bin}, to apply the ellipsoid algorithm to approximately solve the linear program, the (approximation) algorithm to the $K$KP is utilized to construct a polynomial time (approximate) separation oracle. In many such practical and theoretical applications, 
%e.g., the  \emph{single-sink capacitated $K$-facility location problem~\cite{aardal2015approximation}} and the \emph{resource constrained scheduling problem}~\cite{jansen2006preemptiveresource},
the subroutine utilized to solve $K$KP frequently appears to be one of the main complexity bottleneck. These observations and facts motivate our study of designing a faster algorithm for $K$KP. 

%\noindent\textbf{Complexity of knapsack problems}
\paragraph{Complexity of knapsack problems.} An FPTAS is highly desirable for NP-hard problems. Unfortunately, it has been shown that there exists no FPTAS for $d$-dimensional knapsack problem for $d\geq 2$, unless P$=$NP~\cite{magazine1984note}.

\subsection{Theoretical motivations and contributions}
\paragraph{Known results of $K$KP.} In this paper we focus on FPTAS for $K$KP (and E-$K$KP). The first FPTAS for $K$KP was proposed in \cite{caprara2000approximation}, by utilizing standard dynamic programming and profit scaling techniques, which runs in  $O(nK^{2}/\varepsilon)$ time and requires $O(n+K^{3}/\varepsilon)$ space. This algorithm was later improved by~\cite{mastrolilli2006hybrid}. Based on the \emph{hybrid rounding} technique, two alternative FPTASs (denoted by Scheme A and Scheme B) were presented, which significantly accelerate the dynamic programming procedure while exhibiting a space-time tradeoff. More specifically, Scheme A achieves a time complexity of $O(n+Kz^{2}/\varepsilon^{2})$ and space complexity of $O(n+z^{3}/\varepsilon)$, Scheme B needs $O(n+z^{2}/\varepsilon)$ space but requires a run-time of $O(n+(Kz^{2}+z^{4})/\varepsilon^{2})$. We remark that~\cite{krishnan2006multiscale} also investigated this problem, under an additional assumption that item profits follow an underlying distribution. This assumption enables the design of a fast algorithm via rounding the item profits adaptively according to the profit distribution.

The current fastest FPTAS (Scheme A) sacrifices its space complexity, in order to improve run-time performance. This may not be desirable as the space requirement is often a more serious bottleneck for practical applications than running time~\cite[p.~168]{kellerer2003knapsack}. Despite the recent widespread applications of the $K$KP problem~\cite{desir2016assortment,wu2015hear,ahuja2004multi, epstein2012bin,nobibon2011optimization,soldo2012optimal}, the state-of-the-art complexity results established in~\cite{mastrolilli2006hybrid} have not been improved since then. This lack of progress brings us to our first question: \emph{Is it  possible to design a more efficient FPTAS with lower time and/or space complexity to enhance practicality?}
	
Moreover, while the two schemes in~\cite{mastrolilli2006hybrid} achieve substantial improvements compared with~\cite{caprara2000approximation}, it is worth noting that there exists a hard parameter regime $\mathcal{H}=\{(n,K,\varepsilon)|K=\Theta(n),  \varepsilon^{-1}=\Omega(n)\}$, in which existing FPTASs in the literature fail to surpass both the time and space complexity barriers  guaranteed by the standard scheme in~\cite{caprara2000approximation}. For example, the run-time of Scheme B is higher than that of~\cite{caprara2000approximation}. Hence from a theoretical point of view, it is natural to ask: \emph{Can we design a new FPTAS that has lower time complexity or space complexity than the standard FPTAS~\cite{caprara2000approximation} over all parameter regimes?} 

\begin{table}[h]
\label{tableresult}
\centering
\begin{threeparttable}
{\begin{tabular}{|l|l|l|l|}\hline \textbf{Reference}& \textbf{Year}&\textbf{Time Complexity} & \textbf{Space Complexity}\\ \hline \hline \cite{caprara2000approximation}& $2000$ &$O(\frac{nK^{2}}{\varepsilon})$ & $O(n+\frac{K^{3}}{\varepsilon})$  \\\hline \cite{mastrolilli2006hybrid} (Scheme A) & $2006$ &$O(n+\frac{Kz^{2}}{\varepsilon^{2}})$&$O(n+\frac{z^{3}}{\varepsilon})$\\ \hline \cite{mastrolilli2006hybrid} (Scheme B)& $2006$ & $O(n+\frac{Kz^{2}+z^{4}}{\varepsilon^{2}})$& $O(n+\frac{z^{2}}{\varepsilon})$\\ \hline This Paper  &$2019$ &$\widetilde{O} (n + \frac{z^2}{\varepsilon^2})$ &$O(n+\frac{z^{2}}{\varepsilon})$\\ \hline\end{tabular}}	
\caption{Comparisons between different FPTASs. Here $z=\min\{K, \varepsilon^{-1}\}$, and as shown in Theorem~\ref{mainalgoguaran}, our time complexity can be refined to $\widetilde{O}(n+z^{4}+(z^{2}/\varepsilon)\cdot\min\{n,\varepsilon^{-1}\})$.}
%\begin{tablenotes}
%\item[1] $z=\min\{K, \varepsilon^{-1}\}$; \\
%\item[2] As shown in Theorem~\ref{mainalgoguaran}, our time %complexity can be refined to $\widetilde{O}(n+z^{4}+(z^{2}/\varepsilon)\cdot\min\{n,\varepsilon^{-1}\})$.\\
%\end{tablenotes}
\end{threeparttable}
\end{table}

\paragraph{Our contributions.} 
As summarized in Table~\ref{tableresult}, we break the longstanding barrier and answer the aforementioned questions in the affirmative. In particular, we present a new FPTAS with $\widetilde{O}(n + z^2/\varepsilon^2)$ running time and $O(n+z^{2}/\varepsilon)$ space requirement, which offers $\widetilde{O}(K)$ and $O(z)$ improvements in time and space complexity respectively. Our FPTAS is the \textbf{first} to achieve time complexity that is independent of $K$ (up to logarithmic factors, for a given $\varepsilon$). According to Theorem~\ref{mainalgoguaran}, the time complexity of our algorithm can be indeed refined to $\widetilde{O}(n+z^{4}+(z^{2}/\varepsilon)\cdot\min\{n,\varepsilon^{-1}\})$. From this refined bound, it can be seen that even in the hard regime $\mathcal{H}$, our algorithm has the same time complexity (up to $\log$ factors) as the standard FPTAS~\cite{caprara2000approximation}, while improving its space complexity by a factor of $n$. This implies that our algorithm is also the \textbf{first} FPTAS that outperforms the standard FPTAS~\cite{caprara2000approximation} over all parameter regimes, thus answering the second question in the affirmative.

Our new scheme also helps to improve the state-of-the-art complexity results of several problems in other fields, owing to the widespread applications of $K$KP (and E-$K$KP). In Appendix \ref{Secapp}, we take the resource constrained scheduling problem \cite{jansen2006preemptiveresource} as an illustrative example.

\subsection{Technique Overview}
Different from the \emph{hybrid rounding} technique proposed in \cite{mastrolilli2006hybrid}, which simplifies the structure of the input instance and approximately guarantees the objective value, we show that it is possible to achieve a better complexity result solely via \emph{geometric rounding} in the preprocessing phase. We divide items into two classes according to their profits and present distinct methods for each class of items. To solve the subproblem for items with low profit, we present a continuous relaxation function, using the natural linear programming relaxation and other alternatives based on structured weights and scaled budget constraint. The carefully designed relaxation function well approximates the optimal objective value of the subproblem and allows us to exploit the redundancy among various input. For every new input parameters, the relaxation can be computed in $O(z/\varepsilon)$ time on average. As for items with large profit, our treatment mainly follows from the novel ``functional'' approximation approach and point of view, which was recently proposed in~\cite{chan2018approximation}. As a straightforward generalization of the 0-1 KP, a two dimensional convolution operator is defined. We perform the convolution procedure in parallel planes to reduce the running time. The fact that there are at most $z$ elements with large profits helps us to bound the discretization precision via parameter $z$, instead of the number of profit functions. Here we adopt a slightly different but rather (unnecessary) sophisticated and tedious presentation via the lens of numerical discretization. We hope that this presentation helps to make the approach more clear (in the context of $K$KP). Finally, an approximate solution is obtained by appropriately putting these two modules together.

\section{Item Preprocessing}\label{overviewsec}

\begin{definition}[Item Partition]\label{classpartition}
Let $\mathcal{L}$ and $\mathcal{S}$ denote the set of large and small items, respectively. Item $e\in E$ is called a \emph{small} item if its profit is no more than $\varepsilon \mathrm{OPT}$, otherwise it is called a \emph{large} item\footnote{We discuss the method of obtaining $\mathrm{OPT}$ in \ref{appendixknowopt}.}, \ie, $\mathcal{S}=\{e\in E|\varepsilon \mathrm{OPT}/K\leq p_{e}\leq \varepsilon \mathrm{OPT}\}$ and $\mathcal{L}=\{e\in E| p_{e}\in \Xi\}$, where $\Xi=[\varepsilon \mathrm{OPT}, \mathrm{OPT}]$. We further divide $\mathcal{L}$ and $\mathcal{S}$ into different classes, $\{\mathcal{L}^{\dag}_{i}\}_{i\in [r_{\mathcal{L}}]}$ and $\{\mathcal{S}^{\dag}_{i}\}_{i\in [r_\mathcal{S}]}$, where $\mathcal{L}^{\dag}_{i}=\{e\in \mathcal{L}|p_{e}\in (\varepsilon(1+\varepsilon)^{i-1} \mathrm{OPT},\; \varepsilon(1+\varepsilon)^{i}\mathrm{OPT} ] \}	\; ( i \in  [r_{\mathcal{L}}])$ and $\mathcal{S}^{\dag}_{i}=\{e\in \mathcal{S}|p_{e}\in (\varepsilon(1+\varepsilon)^{-i}\mathrm{OPT},\; \varepsilon(1+\varepsilon)^{-i+1}\mathrm{OPT} ] \}	\; (i \in  [r_{\mathcal{S}}])$.
Let $r$ denote the number of non-empty classes in $E$, as shown in \ref{nonemptyclass}, we have
%\footnote{Without loss of generality, we can assume that $1/\varepsilon$ is an integer, otherwise we can replace it with the nearest integer, the results in this paper still hold.}
\begin{align}\label{rupperbound}
r=O(\min\{r_{\mathcal{L}}+r_{\mathcal{S}},n\})=O(\min\{\log(K/\varepsilon)/\varepsilon,n\})=\widetilde{O}(\min\{1/\varepsilon, n\}).
\end{align}
\end{definition}

\begin{definition}[Geometric Rounding]\label{profitsim} Without loss of generality, we can assume that elements in the same class have the same profit value. More specifically, we let $p_{e}=p^{\dag}_{i}=\varepsilon(1+\varepsilon)^{i} \mathrm{OPT} \;(\forall e\in \mathcal{L}_{i})$ and $p_{e}=p^{\ddagger}_{i}=\varepsilon(1+\varepsilon)^{-i} \mathrm{OPT}\;(\forall e\in \mathcal{S}_{i})$.
\end{definition}
The simplification in Definitions \ref{classpartition} and \ref{profitsim} does not hurt the solution since it will incur a loss of $O(\varepsilon \mathrm{OPT})$ in the objective value. Let $O^{*}$ denote the optimal solution, exploiting the simple structure of item profits after item partition and profit rounding, we are able to derive the following more fine-grained bound on $|O^{*}\cap \mathcal{L}|$ and the size of $\mathcal{S}$. Its proof is deferred to \ref{appendixpro3.1}.		
\begin{proposition}\label{upperboundsmall}
There are no more than $|O^{*}\cap \mathcal{L}|\leq z$ large items in the optimal solution set $O^{*}$. Without loss of generality, we can assume that the number of small items $|\mathcal{S}|=O(\min\{K\cdot\log(K/\varepsilon)/\varepsilon,n\})=\widetilde{O}(\min\{K/\varepsilon,n\})$.
\end{proposition}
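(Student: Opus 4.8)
The plan is to prove the two assertions independently, each by an elementary counting argument that uses the profit structure established in Definitions~\ref{classpartition} and~\ref{profitsim}.

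For the first assertion I would bound $|O^{*}\cap\mathcal{L}|$ in two different ways and take the smaller. Since $O^{*}$ is feasible, the cardinality constraint gives $|O^{*}\cap\mathcal{L}|\leq|O^{*}|\leq K$. On the other hand every large item $e\in\mathcal{L}$ satisfies $p_{e}\geq\varepsilon\mathrm{OPT}$ by Definition~\ref{classpartition}, whereas $\sum_{e\in O^{*}}p_{e}\leq\mathrm{OPT}$; hence $\varepsilon\mathrm{OPT}\cdot|O^{*}\cap\mathcal{L}|\leq\mathrm{OPT}$, that is $|O^{*}\cap\mathcal{L}|\leq 1/\varepsilon$. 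Combining the two estimates yields $|O^{*}\cap\mathcal{L}|\leq\min\{K,1/\varepsilon\}=z$, and since geometric rounding only raises large-item profits the same bound also holds for the optimum of the rounded instance.

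For the second assertion the key observation is that, after the rounding of Definition~\ref{profitsim}, all items inside a class $\mathcal{S}_{i}^{\dagger}$ carry the identical profit $p_{i}^{\ddagger}$, so items of one class differ only in their weights. I would then argue that it suffices to retain, inside each class $\mathcal{S}_{i}^{\dagger}$, only the $\min\{K,|\mathcal{S}_{i}^{\dagger}|\}$ items of smallest weight and to discard the rest, without changing the optimal value. Indeed, let $O^{*}$ be an optimal solution of the instance before this pruning; the cardinality constraint forces $|O^{*}\cap\mathcal{S}_{i}^{\dagger}|\leq\min\{K,|\mathcal{S}_{i}^{\dagger}|\}$, so if $O^{*}$ uses a discarded item $e\in\mathcal{S}_{i}^{\dagger}$ then some retained item $e'\in\mathcal{S}_{i}^{\dagger}$, necessarily with $w_{e'}\leq w_{e}$, is left unused; replacing $e$ by $e'$ keeps the solution feasible and preserves its total profit (since $p_{e'}=p_{e}$), and iterating this exchange purges every discarded item from the solution. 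After the pruning the number of small items is at most $\sum_{i\in[r_{\mathcal{S}}]}\min\{K,|\mathcal{S}_{i}^{\dagger}|\}\leq K\cdot r_{\mathcal{S}}$, and since $r_{\mathcal{S}}=O(\log(K/\varepsilon)/\varepsilon)$ by~\eqref{rupperbound}, together with the trivial bound $|\mathcal{S}|\leq n$ this gives $|\mathcal{S}|=O(\min\{K\log(K/\varepsilon)/\varepsilon,\,n\})$.

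Neither part is technically hard once the rounding structure is in place; the only step that needs (mild) care is the exchange argument for the small items, namely verifying that whenever $O^{*}$ touches a discarded item of some class there is still a same-profit, no-heavier substitute available among the retained items of that class. This amounts to the inequality $|O^{*}\cap\mathcal{S}_{i}^{\dagger}|\leq\min\{K,|\mathcal{S}_{i}^{\dagger}|\}$, which is just the cardinality constraint restricted to class $i$. One could further sharpen the per-class cap to $\min\{K,\lceil(1+\varepsilon)^{i}/\varepsilon\rceil,|\mathcal{S}_{i}^{\dagger}|\}$ by also using that every class-$i$ item has profit $\Theta(\varepsilon(1+\varepsilon)^{-i}\mathrm{OPT})$, which would shave the logarithmic factor, but the simpler cap already suffices for the claimed bound.
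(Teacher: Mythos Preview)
Your proposal is correct and follows essentially the same route as the paper: for the large-item bound you combine the cardinality constraint $|O^{*}|\leq K$ with the profit inequality $\varepsilon\mathrm{OPT}\cdot|O^{*}\cap\mathcal{L}|\leq\mathrm{OPT}$, exactly as the paper does, and for the small-item bound you prune each class $\mathcal{S}_{i}^{\dagger}$ down to its $K$ lightest items via an exchange argument, which is precisely the paper's ``retain the $K$ most profitable items, \ie, items with the smallest weights'' step (the paper states this in one line without spelling out the exchange). Your write-up is simply a more explicit version of the paper's terse proof.
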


\section{Algorithm for Large Items}\label{largeitemsubsection}
To approximately solve the $K$-item knapsack problem on ground set $E$, the first step of our approach is to divide this problem into two smaller $K$KP problems, which are defined on the large item set $\mathcal{L}$ and small item set $\mathcal{S}$ respectively. In this section we study the subproblem on $\mathcal{L}$, which is the same as the original problem, except that the ground set is substituted by $\mathcal{L}$ and the cardinality upper bound $k$ must be no less than $z$.
	
\subsection{An abstract algorithm based on convolution}\label{AnAbstractAlgorithm}
%\subsubsection{Introduction to $\varphi_{E}$ and $\phi_{\mathcal{L}}$}
In the following we first define the profit function $\varphi_{(\cdot)}(\cdot,\cdot): 2^{\mathcal{L}}\times \mathbb{R}^{+} \times [z]\rightarrow \mathbb{R}^{+}$. From the definition we can see that $\varphi_{\mathcal{L}}(\omega,k)$ is equal to the optimal objective value of the subproblem considered in this section. 
\begin{definition}[Profit function~\cite{chan2018approximation}]\label{defProfitFunction} For any given set $T\subseteq E$, real number $\omega$, and integer $k$, $\varphi_{T}(\omega,k)$ is given by $\varphi_{T}(\omega,k)=\max\{\sum_{e\in T^{\prime}}{p_{e}}|\sum_{e\in T^{\prime}}{w_{e}}\leq \omega, |T^{\prime}|\leq k,T^{\prime}\subseteq T \subseteq E\}$, which denotes the optimal objective value of the $K$-item knapsack problem that is defined on set $T$, while the budget and cardinality are $\omega, k$ respectively.
\end{definition}

Our objective is to approximately compute matrix $\mathsf{Q}_{\mathcal{L}}=\{\varphi_{\mathcal{L}}(\omega,k)\}_{\omega\in X, k\in [z]}$, in which the value of $X$ will be specified in Section \ref{FastConvolutionAlgorithm}. This matrix plays an important role in our final item combination procedure, as we will show later in Section \ref{mainalgosec}. To compute the profit function efficiently, we introduce the following \emph{inverse weight function} $\phi_{(\cdot)}(\cdot,\cdot): 2^{\mathcal{L}}\times \Xi \times [z]\rightarrow \mathbb{R}^{+}$, which is one of the key ingredients in computing the profit function.
	
\begin{definition}[Inverse weight function]\label{defInverseWeightFunction}
For any given set $T\subseteq E$, real number $p$ and integer $k$, $\phi_{T}(p,k)$ is given by $\phi_{T}(p,k)=\min\{\sum_{e\in T^{\prime}}{w_{e}}|\sum_{e\in T^{\prime}}{p_{e}}\geq p, |T^{\prime}|\leq k, T^{\prime}\subseteq T\}$, which characterizes the minimum possible total weights under which there exists a subset of $T$ with total profit being no less than $p$ and cardinality no more than $k$.
\end{definition}
	
An immediate consequence of Definitions~\ref{defProfitFunction} and~\ref{defInverseWeightFunction} is that we can easily obtain the value of $\varphi_{\mathcal{L}}(\omega,k)$ based on $\phi$, \ie, via equation $\varphi_{\mathcal{L}}(\omega,k)=\sup\{p\in \mathbb{R}^{+}|\phi_{\mathcal{L}}(p,k)\leq \omega\}$. Therefore it suffices to derive the inverse weight function $\phi_{\mathcal{L}}(\cdot,\cdot)$ to compute $\mathsf{Q}_{\mathcal{L}}$.  
	
%\noindent\textbf{Remark.} If we obtain the profit function by directly breaking the original problem into subproblems with smaller weights and cardinality upper bounds, by combining them, we can get a solution to the original problem. However, the combination step will suffer from an expensive computational cost, because we may need to enumerate all the possibilities of budget and cardinality splitting. However, considering the weight function instead will allow us to exploit the special structure of the rounded item profits to accelerate the combination process, which will be discussed in Section \ref{FastConvolutionAlgorithm}. 

%\subsubsection{Algorithm for computing $\phi_{\mathcal{L}}$}

\begin{algorithm}[H]\label{computingphi}
\small
    \caption{Computing $\phi_{\mathcal{L}}(\cdot,\cdot)$}
    \label{highlevel}
\textbf{Input:} Partition scheme $\mathcal{L}=\cup_{i=1}^{\ell}{\mathcal{L}^{(i)}}$, Convolution operator $\otimes$;\\
\textbf{Output:} $\phi_{\mathcal{L}}(\cdot,\cdot)$\\
\For{$i=1$ to $\ell$}
{$\phi_{\cup_{j=1}^{i}{\mathcal{L}^{(j)}}}(\cdot,\cdot)\leftarrow (\phi_{\cup_{j=1}^{i-1}{\mathcal{L}^{(j)}}}\otimes \phi_{\mathcal{L}^{(i)}})(\cdot,\cdot)$;\\}
\textbf{Return $\phi_{\mathcal{L}}(\cdot,\cdot)$}
\end{algorithm} 

\paragraph{Algorithm for computing $\phi_{\mathcal{L}}(\cdot, \cdot)$.}
%\noindent\textbf{Algorithm for computing $\phi_{\mathcal{L}}(\cdot, \cdot)$.}

If we partition the large item set $\mathcal{L}$ into $\ell$ disjoint subsets as $\mathcal{L}=\cup_{i=1}^{\ell}{\mathcal{L}^{(i)}}$, then $\phi_{\mathcal{L}}$ can be computed by performing convolution operations sequentially. We specify the details in Algorithm \ref{computingphi} and the \emph{convolution operator} $\otimes$ is defined as follows.
	
\begin{definition}[Two dimensional convolution operator $\otimes$]\label{convolutiondeflemma}
For any two disjoint sets $S_{1}, S_{2} \subseteq E$, we use $(\phi_{S_{1}}\otimes \phi_{S_{2}})(\cdot,\cdot)$ to denote the convolution of functions $\phi_{S_{1}}(\cdot,\cdot)$ and $\phi_{S_{2}}(\cdot,\cdot)$, then it can be represented as,
\begin{align*}
(\phi_{S_{1}}\otimes \phi_{S_{2}})(p,k)&=\min\Big\{\phi_{S_{1}}(p_{1},k_{1})+\phi_{S_{2}}(p_{2},k_{2}) \Big|k_{1}+k_{2}\leq k, p_{1}+p_{2}\geq p \Big\} \\
&\equiv \phi_{S_{1}\cup S_{2}}(p,k).
\end{align*}
\end{definition}
Under this notation, function $\phi_{\mathcal{L}}(\cdot,\cdot)$ defined on $\mathcal{L}$ can be represented as $\phi_{\mathcal{L}}(p,k)=(\otimes_{i=1}^{\ell}{\phi_{\mathcal{L}^{(i)}}})(p,k)$. It is important to remark that
the algorithm is a rather general description of the convolution procedure, and the partition scheme should be further specified. Generally speaking, different partition schemes will induce different complexity results. For example, if we partition $\mathcal{L}$ into singletons, \ie, $\mathcal{L}^{(i)}=\{e_{i}\}$ and $\ell=|\mathcal{L}|$, then $\phi_{\mathcal{L}}(p,K)=(\otimes_{i=1}^{|\mathcal{L}|}	{\phi_{\{e_{i}\}}})(p,K)$. In this case, the algorithm is equivalent to the standard dynamic programming paradigm. In each stage we are in charge of making the decision of whether to include item $e_{i}$ or not. 

In this paper, we divide $\mathcal{L}$ in the same way as that in Definition \ref{classpartition}, \ie, $\mathcal{L}^{(i)}=\mathcal{L}^{\dag}_{i}, \forall i\in [r_{\mathcal{L}}]$.

\subsection{Discretizing the function domain}\label{NumericalDiscretization}
At the current stage, it is worth pointing out that in the convolution operation between inverse weight functions, the profit variable $p$ appears as a decision variable that varies continuously in $\Xi$. In addition, we are not able to obtain the closed form solution of the convolution operation analytically. The solution is to transform the problem into a computationally tractable one via discretization, then compute an (approximate) solution utilizing the computable version.

\paragraph{Discretizing the profit space.}
%\noindent\textbf{Discretizing the profit space.} 
To implement the convolution in polynomial time, we discretize the interval $\Xi$ with the points $\{x_{i}\}_{ i\in [m]}$ as $X=\{x_{i}: \varepsilon \mathrm{OPT}=x_{1}<x_{2}<\ldots<x_{m-1}<x_{m}=\mathrm{OPT}\}\subseteq \Xi$. We denote the \emph{discretization parameter} of $X$ by \emph{discretization parameter} $\delta_{X}=\max_{1 \leq i \leq m-1} {\{x_{i+1}-x_{i}\}}$. To tackle the computational challenge induced by the continuity of profit $p$, we execute the convolution operation over the discrete functions that are defined on $X\times [z]$,
\begin{align}\label{discreteprofit}
(\phi_{S_{1}}\otimes \phi_{S_{2}})^{X}(p,k)=\min_{p_{1}, p_{2}\in X} & \Big\{\phi^{X}_{S_{1}}(p_{1},k_{1})+\phi^{X}_{S_{2}}(p_{2},k_{2}) \Big|k_{1}+k_{2}\leq k, p_{1}+p_{2}\geq p \Big\}.
\end{align}
More specifically, we start with functions $\phi^{X}_{\mathcal{L}^{(i)}}$, and compute $\phi^{X}_{\cup_{j=1}^{i}\mathcal{L}^{(j)}}$ iteratively until $\phi^{X}_{\mathcal{L}}$ is obtained. In general, function $\phi^{X}_{\cup_{i\in I}\mathcal{L}^{(i)}}(\cdot,\cdot)\equiv(\otimes_{i\in I}{\phi^{X}_{\mathcal{L}^{(i)}}})(\cdot,\cdot)$ for any $I\subseteq [\ell]$. The discrete profit function $\varphi^{X}_{S}(\cdot,\cdot)$ can also be recovered by its relation with the inverse weight function, \ie, $\varphi^{X}_{S}(\omega,k)=\max\{p \in X: \phi^{X}_{S}(p,k)\leq \omega\},\forall S\subseteq E$. 
%Slightly different from the inverse weight function, $\phi^{X}_{\mathcal{L}^{(i)}}$ is only defined on $X\times [z]$.
	
\paragraph{Convergence behaviour of $\varphi^{X}(\cdot,\cdot)$.}	
%\noindent\textbf{Convergence behaviour of $\varphi^{X}(\cdot,\cdot)$.} 
We first show point-wise convergence of $\{\varphi^{X}_{(\cdot)}(\cdot,\cdot)\}_{X}$ towards $\varphi_{(\cdot)}(\cdot,\cdot)$ when $\delta_{X}$ goes to zero. It is worth pointing out that the straightforward intuition that convergence occurs if discretization is small, may not always hold. Indeed we can verify that the weight function $\phi^{X}$ may not converge to $\phi$ through the following example.

\begin{example}[$\phi^{X}$ does not converge to $\phi$]\label{conterexample}
Considering sets $S_{i}=\{e^{(i)}_{1},e^{(i)}_{2}\}\;(i=1,2)$, where the item profits and weights are given by $(p_{e^{(i)}_{1}},w_{e^{(i)}_{1}})=(\mathrm{OPT}/8,\omega/2)\;(i=1,2)$, $(p_{e^{(1)}_{2}},w_{e^{(1)}_{2}})=(\mathrm{OPT}/3,\omega/4)$, $(p_{e^{(2)}_{2}},w_{e^{(2)}_{2}})=(\mathrm{OPT}/6,\omega/4)$. According to Definition \ref{convolutiondeflemma}, we know that $\phi_{S_{1}\cup S_{2}}(\mathrm{OPT}/2,3)=w_{e^{(1)}_{2}}+w_{e^{(2)}_{2}}=\omega/2$. Let the discretization set $X_{d}=\Xi\cap\{i\cdot \frac{OPT}{2^{d}}|i\in [2^{d}]\}$, then it follows that the spacing $\delta_{X_{d}}\leq \frac{OPT}{2^{d}}$ and $\delta_{X_{d}}\rightarrow 0$ as $d\rightarrow \infty$. However, since $p_{e^{(i)}_{2}}\notin X_{d}\;(i=1,2)$, we have $\phi^{X}_{S_{1}\cup S_{2}}(\mathrm{OPT}/2,3)=\omega/2+w_{e^{(1)}_{2}}+w_{e^{(2)}_{2}}=\omega\neq \phi_{S_{1}\cup S_{2}}(\mathrm{OPT}/2,3)$ and $\phi^{X}_{S_{1}\cup S_{2}}$ does not converge to $\phi_{S_{1}\cup S_{2}}$.
\end{example}

%following example.

\begin{lemma}\label{convergelemma}
For any finite index set $I$ and $\omega,k$, we have $\lim_{\delta_{X}\rightarrow 0}{\varphi^{X}_{\cup_{i\in I}\mathcal{L}^{(i)}}(\omega,k)} =\varphi_{\cup_{i\in I}\mathcal{L}^{(i)}}(\omega,k)$ for fixed $\omega,k$.
\end{lemma}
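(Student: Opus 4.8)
The plan is to prove the two matching one-sided bounds $\varphi^X_L(\omega,k)\le\varphi_L(\omega,k)$ (for \emph{every} $X$) and $\liminf_{\delta_X\to0}\varphi^X_L(\omega,k)\ge\varphi_L(\omega,k)$, where throughout I abbreviate $L=\cup_{i\in I}\mathcal{L}^{(i)}$ and $t=|I|$. For the first bound I would show, by induction along the partition, that the discrete convolution can only \emph{over-estimate} the continuous inverse weight function, i.e. $\phi^X_S(p,k)\ge\phi_S(p,k)$ for all $p\in X$. The base case holds since for a single class $\phi^X_{\mathcal{L}^{(i)}}$ is the restriction of $\phi_{\mathcal{L}^{(i)}}$ to $X$; in the inductive step the discrete minimization \eqref{discreteprofit} ranges over a subset of the pairs $(p_1,p_2)$ allowed in the continuous convolution, hence is at least as large, and the continuous convolution equals $\phi_{S_1\cup S_2}$ by Definition~\ref{convolutiondeflemma}. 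Consequently $\{p\in X:\phi^X_L(p,k)\le\omega\}\subseteq\{p\in\mathbb{R}^+:\phi_L(p,k)\le\omega\}$, so $\varphi^X_L(\omega,k)=\max\{p\in X:\phi^X_L(p,k)\le\omega\}\le\sup\{p:\phi_L(p,k)\le\omega\}=\varphi_L(\omega,k)$.

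For the reverse bound, fix an optimal $O^*\subseteq L$ attaining $\varphi_L(\omega,k)=:P^*$, with $\sum_{e\in O^*}w_e\le\omega$ and $|O^*|\le k$, and write $O^*_i=O^*\cap\mathcal{L}^{(i)}$ with profit $P^*_i$, weight $W^*_i$ and size $k_i$. For each class with $O^*_i\neq\emptyset$ I would set $p_i=\lfloor P^*_i\rfloor_X$, the largest grid point at most $P^*_i$; this is well defined because $\varepsilon\mathrm{OPT}<P^*_i\le P^*\le\mathrm{OPT}$ (large-item profits exceed $\varepsilon\mathrm{OPT}$, and we may assume $P^*\le\mathrm{OPT}$), and then $O^*_i$ certifies $\phi^X_{\mathcal{L}^{(i)}}(p_i,k_i)\le W^*_i$. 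Feeding these choices through Algorithm~\ref{computingphi} — set $q_1=p_1$ and $q_j=\lfloor q_{j-1}+p_j\rfloor_X$ in the algorithm's processing order — a short induction gives that (i) each $q_j$ stays in $[\varepsilon\mathrm{OPT},\mathrm{OPT}]$ so every floor is defined, (ii) $\phi^X$ of the first $j$ classes evaluated at $(q_j,\sum_{i\le j}k_i)$ is at most $\sum_{i\le j}W^*_i$ (each step $q_{j-1}+p_j\ge q_j$ with $q_{j-1},p_j,q_j\in X$ is a legal chaining), and (iii) each rounding loses less than $\delta_X$, so $q_t\ge P^*-(2t-1)\delta_X$. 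Hence $\phi^X_L(q_t,k)\le\sum_iW^*_i\le\omega$ with $q_t\in X$, giving $\varphi^X_L(\omega,k)\ge q_t\ge\varphi_L(\omega,k)-(2t-1)\delta_X$. Letting $\delta_X\to0$ with $t=|I|$ fixed and finite, and combining with the first bound, yields the claim.

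The main obstacle — and the reason the lemma holds despite Example~\ref{conterexample} — is pinning down precisely why downward rounding is safe for $\varphi^X$ but not for $\phi^X$: $\varphi^X$ only asks the accumulated profit to reach \emph{some} grid point near the optimum, so rounding profits down never breaks feasibility and costs only $O(|I|\,\delta_X)$ in total, whereas $\phi^X$ is tied to a \emph{prescribed} profit target and may be forced to round up, discontinuously inflating the required weight. A secondary, routine chore is the boundary bookkeeping — queries with $\varphi_L(\omega,k)$ near or below $\varepsilon\mathrm{OPT}$, and classes absent from $O^*$ — handled by the standard conventions that $\phi^X_S$ equals $0$ on the empty subset and that unrepresented classes contribute nothing.
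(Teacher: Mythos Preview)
Your proposal is correct and follows essentially the same two-sided-bound approach as the paper: the upper bound $\varphi^X_L\le\varphi_L$ via $\phi^X\ge\phi$, and the lower bound by rounding the per-class profits of an optimal solution down to grid points and chaining through the discrete convolution, incurring an $O(|I|\,\delta_X)$ loss. The paper's own proof here is extremely terse (it simply asserts the $|I|=2$ sandwich $\varphi-2\delta_X\le\varphi^X\le\varphi$ and invokes induction), with the detailed rounding argument you spell out appearing instead in the proof of Lemma~\ref{continuityvarphi}; your write-up is thus more complete than the paper's proof of this particular lemma while matching its substance.
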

\begin{proof}
It suffices to prove the case when $|I|=2$, because for the case when $|I|>2$, convergence can be proven by induction, using the result we have for $|I|=2$. When there are only two elements in $I$, it is easy to check that $\varphi_{\cup_{i\in I}\mathcal{L}^{(i)}}(\omega,k)-2\delta_{X}\leq \varphi^{X}_{\cup_{i\in I}\mathcal{L}^{(i)}}(\omega,k) \leq \varphi_{\cup_{i\in I}\mathcal{L}^{(i)}}(\omega,k)$, thus the proof is complete.	
\end{proof}

The theoretical convergence of $\varphi^{X}(\cdot,\cdot)$ ensures the near-optimality of the solution obtained by discretization, as long as $X$ is dense enough in $\Xi$. However, what matters greatly is the \emph{order of the accuracy}, which refers to how rapidly the error decreases in the limit as the discretization parameter tends to zero. The formal definition of the convergence speed of discretization methods is given as following.

\begin{definition}[\citep{michelle2002numerical}] 
Let $n$ be the number of grid points in the discretization process, the discretization method is said to converge with order $p$ if for the relevant sequence $\{x_{n}\}_{n\geq 0}$, there exists $L$ such that $|x_{n}-L|=O(n^{-p})$ holds.
\end{definition}

This speed is directly related to the complexity of our algorithm. From the following lemma, we can conclude that the method of discretizing $X$ by a uniform grid set converges with order $1$, as $\delta_{X}=O(1/|X|)$ for uniform grid set. 
\begin{lemma}\label{continuityvarphi}
Let $\phi^{X}_{\mathcal{L}}$ be the weight function, then for any given budget $\omega\leq W$, cardinality upper bound $k\leq z$, and discretization set $X$, we have $|\varphi^{X}_{\mathcal{L}}(\omega,k)-\varphi_{\mathcal{L}}(\omega,k)|\leq C\delta_{X}$, where the coefficient $C=z+1$. As a consequence, $|X|$ must be of order $\Omega(z/\varepsilon)$ to ensure an error of order $O(\varepsilon \mathrm{OPT})$.
\end{lemma}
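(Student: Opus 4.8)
The plan is to establish the two one-sided bounds $\varphi^{X}_{\mathcal{L}}(\omega,k)\le\varphi_{\mathcal{L}}(\omega,k)$ and $\varphi_{\mathcal{L}}(\omega,k)-(z+1)\delta_X\le\varphi^{X}_{\mathcal{L}}(\omega,k)$ separately; the first is essentially free and the second carries all the content. For the upper bound I would first prove, by induction on the number of classes already processed by Algorithm~\ref{computingphi}, that $\phi^{X}_{\cup_{j\le i}\mathcal{L}^{(j)}}(p,k)\ge\phi_{\cup_{j\le i}\mathcal{L}^{(j)}}(p,k)$ at every grid point $p\in X$: the base case is an equality, since on a single class $\mathcal{L}^{(i)}$ — whose items all share the profit $p^{\dag}_i$ after Definition~\ref{profitsim} — the starting function $\phi^{X}_{\mathcal{L}^{(i)}}$ is just the exact value of $\phi_{\mathcal{L}^{(i)}}$ restricted to grid points, and the inductive step holds because the discrete convolution \eqref{discreteprofit} minimizes the same objective as Definition~\ref{convolutiondeflemma} over a strictly smaller feasible set of splits $(p_1,p_2)$. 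Since $\varphi^{X}_{\mathcal{L}}(\omega,k)=\max\{p\in X:\phi^{X}_{\mathcal{L}}(p,k)\le\omega\}$ and $\varphi_{\mathcal{L}}(\omega,k)=\sup\{p\in\mathbb{R}^{+}:\phi_{\mathcal{L}}(p,k)\le\omega\}$, this pointwise inequality between the weight functions yields $\varphi^{X}_{\mathcal{L}}\le\varphi_{\mathcal{L}}$ at once.

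For the lower bound, fix an optimal solution $T^{*}\subseteq\mathcal{L}$ for $\varphi_{\mathcal{L}}(\omega,k)$, so $\sum_{e\in T^{*}}w_e\le\omega$, $|T^{*}|\le k\le z$, and $\sum_{e\in T^{*}}p_e=\varphi_{\mathcal{L}}(\omega,k)=:P^{*}$. Write $T^{*}=\bigcup_{i\in[r_{\mathcal{L}}]}T^{*}_i$ with $T^{*}_i=T^{*}\cap\mathcal{L}^{(i)}$, and set $k_i=|T^{*}_i|$, $w_i=\sum_{e\in T^{*}_i}w_e$, and $P_i=k_i p^{\dag}_i=\sum_{e\in T^{*}_i}p_e$. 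Call class $i$ \emph{active} when $k_i\ge1$; since $\sum_i k_i=|T^{*}|\le z$, at most $z$ classes are active. The idea is to exhibit a single admissible ``chain'' of choices running through the iterated convolution of Algorithm~\ref{computingphi} that imitates $T^{*}$ class by class: an inactive class contributes nothing and its convolution step leaves both the running profit and the running weight unchanged (a loss-free pass-through, under the natural convention that a class may supply profit $0$ with $0$ items at weight $0$); an active class $i$ is asked to supply the largest grid point $p_i\in X$ with $p_i\le P_i$, using $k_i$ items, which is legitimate because $\lceil p_i/p^{\dag}_i\rceil\le k_i$ and hence $\phi^{X}_{\mathcal{L}^{(i)}}(p_i,k_i)\le\phi_{\mathcal{L}^{(i)}}(P_i,k_i)\le w_i$ (exactness of the single-class function at grid points, monotonicity in its profit argument, and the fact that $T^{*}_i$ itself witnesses $\phi_{\mathcal{L}^{(i)}}(P_i,k_i)\le w_i$). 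Tracking the running profit produced along this chain and the roundings incurred at the active steps, one shows that after all $r_{\mathcal{L}}$ steps the chain realizes some $q\in X$ with $q\le P^{*}$ and $P^{*}-q\le(z+1)\delta_X$, while the total weight used is $\sum_i\phi^{X}_{\mathcal{L}^{(i)}}(p_i,k_i)\le\sum_i w_i\le\omega$. Consequently $\phi^{X}_{\mathcal{L}}(q,k)\le\omega$, so $\varphi^{X}_{\mathcal{L}}(\omega,k)\ge q\ge\varphi_{\mathcal{L}}(\omega,k)-(z+1)\delta_X$. The claimed consequence on $|X|$ is then immediate: forcing $(z+1)\delta_X=O(\varepsilon\mathrm{OPT})$ requires $\delta_X=O(\varepsilon\mathrm{OPT}/z)$, and since $X\subseteq\Xi$ has length $\Theta(\mathrm{OPT})$ with mesh $\delta_X$, it must contain $\Omega(\mathrm{OPT}/\delta_X)=\Omega(z/\varepsilon)$ points.

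The main obstacle is exactly the error accounting hidden in the previous paragraph. A single convolution in \eqref{discreteprofit} both restricts each class' profit contribution to $X$ and stores the running partial profit only at grid points, so the crude estimate loses one $\delta_X$ per convolution step and would give an error of order $r_{\mathcal{L}}\delta_X=\widetilde{O}(\delta_X/\varepsilon)$ — far too large, and Example~\ref{conterexample} shows that such off-grid splits really can make $\phi^{X}$ diverge from $\phi$. The point that saves the bound $(z+1)\delta_X$ is that a convolution step for an inactive class is genuinely loss-free, so only the at most $|T^{*}|\le z$ active steps can contribute rounding error, with one residual $\delta_X$ ($=$ the ``$+1$'') absorbing the boundary rounding incurred when aligning the running partial sum with $X$ and recovering $\varphi^{X}$ from $\phi^{X}$ at the end. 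Making precise the two facts that an inactive step changes nothing and that an active step costs only $O(\delta_X)$ regardless of $k_i$ is where the real work lies, and it is driven entirely by the single-profit-per-class structure of $\mathcal{L}^{(i)}$ from Definition~\ref{profitsim} together with the cardinality bound $|T^{*}|\le z$ coming from Proposition~\ref{upperboundsmall}.
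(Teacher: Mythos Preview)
Your proposal is correct and follows essentially the same approach as the paper's proof. The paper likewise splits the optimal set $O^{*}_{\omega,k}$ into its class pieces $\mathcal{L}^{*(i)}_{\omega,k}$, rounds each class profit down to the nearest grid point $x^{*(i)}_{\omega,k}\in X$, and then invokes subadditivity of $\phi^{X}$ to conclude $\phi^{X}_{\mathcal{L}}(x^{*}_{\omega,k},k)\le\omega$; the only notable presentational difference is that the paper applies subadditivity in one line rather than threading a ``chain'' through the iterated convolution, but the error-counting argument (at most $z$ nonempty classes, hence at most $z$ rounding losses of $\delta_X$ each, plus one final alignment loss) is exactly the point you identified.
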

\begin{proof}
The proof is deferred to Appendix \ref{appendixlemten}.
\end{proof}

\subsection{Fast convolution algorithm}\label{FastConvolutionAlgorithm}
Now we settle the problem of designing a fast convolution algorithm, which is the last remaining issue that has a critical impact on the efficiency of the algorithm for large items. To this end, we show an inherent connection between convolution results under different inputs $p$ and $k$, which is formally described in Lemma \ref{monoineq}. Owing to this observation, we are able to remove a large amount of redundant calculations when facing new input parameters. To start with, we first sort items in each $\mathcal{L}^{\dag}_{i}$ in non-increasing order of weights, which takes $O(z\log z)$ time. We define the optimum index function as follows.
	
\begin{definition}[Optimum index function]\label{optimalindex}
$\psi: X\times [K]\rightarrow [K]$ is defined as,
\begin{align}\label{indexfunction}
\psi(p,k)=\argmin &\Big\{\theta \in [k]\Big|\phi^{X}_{\mathcal{L}^{\dag}_{a}}(\max\{x\in X: x\leq \theta\cdot p^{\dag}_{a}\}, \theta)\notag\\
&+\phi^{X}_{S}(\max\{x\in X: x\leq p-\theta\cdot p^{\dag}_{a}\}, k-\theta)\Big\}\;(p\in X).
\end{align}	
\end{definition}
	
Here (\ref{indexfunction}) benefits from the partition in which all items in the same set $\mathcal{L}^{\dag}_{i}$ have equal profit value. Specifically, when we derive the result of  $(\phi^{X}_{\mathcal{L}^{\dag}_{a}}\otimes \phi^{X}_{S})(p,k)$, there is indeed only one decision variable $\theta$, \ie, the number of elements selected from $\mathcal{L}^{\dag}_{i}$, that should be figured out. Hence, we denote the optimal value of $\theta$ by the index function $\psi$. Our primary objective is then reduced to figure out all the indices $\{\psi(p,k)\}_{p\in X,k\in [z]}$, for which we give a graphic illustration in Figure~\ref{figsearchingspace}. It can be regarded as finding \emph{column minimums} in the cube, here column minimum refers to the optimal indices defined in Definition \ref{optimalindex}. 

\paragraph{Consider the problem in parallel slices.}
%\noindent\textbf{Consider the problem in parallel slices.} 
As shown in Figure \ref{ddcslice}, we divide the cube into parallel slices. Consider slice
\begin{align}
H=\Big\{(p,k)\Big|p=p_{0}+\zeta\lambda_{a},k=k_{0}+\zeta\Big\}\bigcap \Big( \Xi\times[0,z]	\Big),\label{slicedef}
\end{align}
where $(p_{0},k_{0})$ denotes the boundary point of slice $H$ and hence $p_{0}k_{0}=0$, $\zeta$ represents the drift of point $(p,k)$ from boundary. It can be seen that the angle between slice $H$ and the frontal plane is equal to $\arctan \lambda_{a}^{-1}$, and there are $O(|X|)=O(z/\varepsilon)$ such parallel slices in the cube. On the other hand, plugging (\ref{slicedef}) into (\ref{indexfunction}), the index function can be simplified to 
\begin{align*}
\chi_{H}(\zeta)=\argmin \Big\{ \theta\in [z]\Big|\phi^{X}_{\mathcal{L}^{\dag}_{a}}(\lambda_{a}\theta, \theta)+\phi^{X}_{S}(p_{0}+\lambda_{a}[\zeta-\theta], k_{0}+[\zeta-\theta])\Big\}.
\end{align*}
%$\chi_{H}(\zeta)=\argmin \{ \theta\in [z]|\phi^{X}_{\mathcal{L}^{\dag}_{a}}(\lambda_{a}\theta, \theta)+\phi^{X}_{S}(p_{0}+\lambda_{a}[\zeta-\theta], k_{0}+[\zeta-\theta])\}$.
	
%\noindent\textbf{Bounded ``gradient'' of $\chi_{H}$.} 
Without loss of generality we could assume that there exists an integer $\tau_{a}\in \mathbb{Z}^{+}$ such that $p^{\dag}_{a}=\tau_{a}\cdot \frac{\varepsilon \mathrm{OPT}}{z}$, otherwise we can always modify $p^{\dag}_{a}$ by an $O(\frac{\varepsilon \mathrm{OPT}}{z})$ additive factor to meet this criteria while inducing a $O(\varepsilon \mathrm{OPT})$ loss in the objective function. Consequently we have $\lambda_{a}=\tau_{a}\varepsilon \mathrm{OPT}$. We consider the case when $\Xi$ is discretized by the uniform grid set $X=\{i\cdot \frac{\varepsilon \mathrm{OPT}}{z} |i\in [z/\varepsilon]\}$. Then the following key observation about the distribution of column minima in slice $H$ holds. 
\begin{lemma}\label{monoineq} For any two columns in $H$ that are indexed by $\zeta_{1}$ and $\zeta_{2}$, we have
\begin{align}\label{gradientineq}
\frac{\chi_{H}(\zeta_{2})-\chi_{H}(\zeta_{1})}{\zeta_{2}-\zeta_{1}}\leq 1.
\end{align}
%$\frac{\chi_{H}(\zeta_{2})-\chi_{H}(\zeta_{1})}{\zeta_{2}-\zeta_{1}}\leq 1$.
\end{lemma}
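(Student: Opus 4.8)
The plan is to exploit the exchange structure of the convolution $(\phi^{X}_{\mathcal{L}^{\dag}_{a}}\otimes \phi^{X}_{S})$ along slice $H$, using the fact that the items of $\mathcal{L}^{\dag}_{a}$ have all been sorted in non-increasing order of weight. Fix two columns $\zeta_{1}<\zeta_{2}$ on the slice, and write $\theta_{1}=\chi_{H}(\zeta_{1})$, $\theta_{2}=\chi_{H}(\zeta_{2})$ for the optimal number of elements drawn from $\mathcal{L}^{\dag}_{a}$ at those two columns. Since moving from $\zeta_{1}$ to $\zeta_{2}$ increases the cardinality budget by exactly $\zeta_{2}-\zeta_{1}$ and the profit budget by $\lambda_{a}(\zeta_{2}-\zeta_{1})$, I want to show one cannot be forced to take more than $(\zeta_{2}-\zeta_{1})$ extra items from $\mathcal{L}^{\dag}_{a}$ at the larger column; that is exactly inequality~(\ref{gradientineq}).

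First I would set up the argument by contradiction: suppose $\theta_{2}-\theta_{1}>\zeta_{2}-\zeta_{1}$, i.e.\ $\theta_{2}\geq\theta_{1}+(\zeta_{2}-\zeta_{1})+1$. Consider the optimal configuration realizing $\chi_{H}(\zeta_{2})$, which uses the $\theta_{2}$ lightest items of $\mathcal{L}^{\dag}_{a}$ (monotone in weight, since $\phi^{X}_{\mathcal{L}^{\dag}_{a}}(\lambda_{a}\theta,\theta)$ is attained by the $\theta$ lightest items, all having the common profit $p^{\dag}_{a}$) together with an $S$-side configuration of weight $\phi^{X}_{S}(p_{0}+\lambda_{a}[\zeta_{2}-\theta_{2}],\,k_{0}+[\zeta_{2}-\theta_{2}])$. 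I then construct a candidate configuration for column $\zeta_{1}$: drop the $(\zeta_{2}-\zeta_{1})$ heaviest of these $\theta_{2}$ items from the $\mathcal{L}^{\dag}_{a}$-side, leaving $\theta_{2}-(\zeta_{2}-\zeta_{1})\ge\theta_{1}+1$ items, and keep the same $S$-side set. The key point is that the $S$-side demand at column $\zeta_{1}$ with this new split is $p_{0}+\lambda_{a}[\zeta_{1}-(\theta_{2}-(\zeta_{2}-\zeta_{1}))] = p_{0}+\lambda_{a}[\zeta_{2}-\theta_{2}]$ in profit and $k_{0}+[\zeta_{2}-\theta_{2}]$ in cardinality — i.e.\ identical to what the $\zeta_{2}$-optimal $S$-side configuration already satisfies — so the same $S$-side set is feasible and its weight is unchanged. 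Hence this candidate split, which uses strictly more than $\theta_{1}$ items from $\mathcal{L}^{\dag}_{a}$, has total weight no larger than the $\zeta_{2}$-optimum restricted appropriately; and since we only removed items (reducing the $\mathcal{L}^{\dag}_{a}$-side weight), its total weight is $\le$ that of the $\chi_{H}(\zeta_{1})$-split? — not immediately, which is where care is needed.

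The main obstacle is precisely closing that last comparison: I have produced a feasible split at column $\zeta_{1}$ using $\theta'\ge\theta_{1}+1>\theta_{1}$ items from $\mathcal{L}^{\dag}_{a}$, but $\psi$ / $\chi_{H}$ selects the \emph{smallest} minimizing index, so I must rule out that this larger index is \emph{also} a minimizer (which would still contradict $\theta_{1}=\chi_{H}(\zeta_{1})$ being the argmin, unless the objective strictly improved). I expect to handle this by a symmetric swap in the other direction: take the $\chi_{H}(\zeta_{1})$-optimal split and add the $(\zeta_{2}-\zeta_{1})$ next-lightest items of $\mathcal{L}^{\dag}_{a}$ to get a feasible split at column $\zeta_{2}$ with $\theta_{1}+(\zeta_{2}-\zeta_{1})<\theta_{2}$ items from $\mathcal{L}^{\dag}_{a}$ and $S$-side demand again matching the $\zeta_{1}$-optimum; comparing the total weights of the two constructed splits against the two optima and using that $\chi_{H}$ is the argmin forces a contradiction with $\theta_{2}>\theta_{1}+(\zeta_{2}-\zeta_{1})$. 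The one genuinely delicate accounting point is the interaction with the rounding $\max\{x\in X: x\le\cdot\}$ in Definition~\ref{optimalindex} and with the boundary clipping $H\cap(\Xi\times[0,z])$ in~(\ref{slicedef}): I need $\lambda_{a}$ to be an integer multiple of the grid spacing $\varepsilon\mathrm{OPT}/z$ so that the shifted profit arguments $p_{0}+\lambda_{a}[\zeta-\theta]$ stay exactly on grid points of $X$ and no rounding slack is introduced — which is guaranteed by the normalization $p^{\dag}_{a}=\tau_{a}\cdot\varepsilon\mathrm{OPT}/z$, $\lambda_{a}=\tau_{a}\varepsilon\mathrm{OPT}$ assumed just before the lemma, so the two candidate splits really do land on admissible grid arguments and the weight comparison is exact.
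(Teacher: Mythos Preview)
Your proposal is correct and follows essentially the same exchange argument as the paper: assume $\Delta:=(\theta_{2}-\theta_{1})-(\zeta_{2}-\zeta_{1})>0$, build the two symmetric candidate splits (your $\theta_{2}-(\zeta_{2}-\zeta_{1})$ at column $\zeta_{1}$ and $\theta_{1}+(\zeta_{2}-\zeta_{1})$ at column $\zeta_{2}$ are exactly the paper's $\chi_{H}(\zeta_{1})+\Delta$ and $\chi_{H}(\zeta_{2})-\Delta$), observe that the $S$-side arguments coincide pairwise so those terms cancel when the two optimality inequalities are added. The paper then finishes by noting the resulting inequality contradicts the (strict) convexity of $\theta\mapsto\phi^{X}_{\mathcal{L}^{\dag}_{a}}(\lambda_{a}\theta,\theta)$, which is the sum of the $\theta$ smallest weights in $\mathcal{L}^{\dag}_{a}$; you have the same contradiction implicitly (``comparing the total weights \ldots\ forces a contradiction''), and it would be worth stating this convexity explicitly as the punchline.
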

\begin{proof}
The proof is deferred to Appendix \ref{appendixlem12}.
\end{proof}

\paragraph{Divide-and-Conquer on slice $H$.}	In Lemma \ref{monoineq}, we establish an upper bound on the growth rate of the index function. Taking advantage of this lemma, we are able to reduce the size of the searching space in one column, given that we have figured out the optimum indices at some other columns in the slice $H$. More specifically, consider columns indexed by $\zeta_{1}\leq \zeta_{2}\leq \zeta_{3}$, the information of $\chi_{H}(\zeta_{1})$ and $\chi_{H}(\zeta_{3})$ indeed provide two cutting planes to help us locate $\chi_{H}(\zeta_{2})$ in a smaller interval $[\chi_{H}(\zeta_{3})+\zeta_{2}-\zeta_{3},\chi_{H}(\zeta_{1})+\zeta_{2}-\zeta_{1}]$.

Inspired by this observation, we design a \emph{divide-and-conquer} procedure to compute the optimum indices efficiently for any slice in the form of (\ref{slicedef}). We start with a recursive call to determine the optimum indices of all the even-indexed columns. Here a column is called even (odd) column if and only if its corresponding $\zeta$ value in (\ref{slicedef}) is even (odd). Then for each odd column $\chi_{H}(2i)$, it can be computed by enumerating the interval $[\chi_{H}(2i+1)-1, \chi_{H}(2i-1)+1]$. The details are specified in Appendix \ref{appendixsliceindex}. 
%	\begin{figure}[t]
%	\centering
%		\includegraphics[scale=0.3]{fig41.pdf}
%	\caption{Locate $\chi_{H}(\zeta_{2})$ using the information of $\chi_{H}(\zeta_{1})$ and $\chi_{H}(\zeta_{3})$}
%	\label{optimalloca}
%	
%	\end{figure}

The time complexity of computing the index function for a single slice is summarized in the following proposition.
\begin{proposition}\label{sliceddc}
It takes $O(z\log z)=\widetilde{O}(z)$ time to compute $\chi_{H}(\cdot)$.	
\end{proposition}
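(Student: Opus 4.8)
The plan is to analyze the divide-and-conquer recursion described just above the statement and show that its total work over a single slice $H$ is $O(z\log z)$. Let me set up the cost model first. A single evaluation of the objective inside $\chi_H(\cdot)$ — namely computing $\phi^X_{\mathcal{L}^{\dag}_a}(\lambda_a\theta,\theta)+\phi^X_S(p_0+\lambda_a[\zeta-\theta],k_0+[\zeta-\theta])$ for one candidate value of $\theta$ — takes $O(1)$ time, since the items in $\mathcal{L}^{\dag}_a$ were pre-sorted by weight (so $\phi^X_{\mathcal{L}^{\dag}_a}(\lambda_a\theta,\theta)$ is just the sum of the $\theta$ smallest weights, available in $O(1)$ from a prefix-sum array) and $\phi^X_S$ is a table already computed in the previous iteration of Algorithm~\ref{computingphi}, so it is an $O(1)$ lookup. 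Thus the running time for the slice is, up to a constant factor, the total number of $(\zeta,\theta)$ pairs examined across the whole recursion.

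First I would bound the depth of the recursion: since each recursive call halves the number of columns under consideration (we recurse on the even-indexed columns only), and there are $O(z)$ columns in a slice — because $\zeta$ ranges over an interval of length $O(z)$, as $k=k_0+\zeta$ must lie in $[0,z]$ — the depth is $O(\log z)$. Next, and this is where Lemma~\ref{monoineq} does the real work, I would bound the number of candidates examined \emph{at one level} of the recursion. Fix a level and let the columns active at that level be indexed by $\zeta$ values in increasing order, say $\zeta^{(1)}<\zeta^{(2)}<\cdots$; for the generic odd column $\chi_H(2i)$ sandwiched between already-solved columns at $2i-1$ and $2i+1$, Lemma~\ref{monoineq} (applied with the triple $2i-1\le 2i\le 2i+1$, and noting $\chi_H$ is also non-decreasing in $\zeta$, which follows from the same inequality read the other way) confines $\chi_H(2i)$ to the interval $[\chi_H(2i+1)-1,\ \chi_H(2i-1)+1]$, whose length is $\big(\chi_H(2i-1)-\chi_H(2i+1)\big)+2$. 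Summing these interval lengths telescopes: $\sum_i \big[(\chi_H(2i-1)-\chi_H(2i+1))+2\big]\le \big(\max\chi_H-\min\chi_H\big)+O(z)=O(z)$, since $\chi_H$ takes values in $[z]$ and there are $O(z)$ odd columns. Hence each level costs $O(z)$, and over $O(\log z)$ levels the total is $O(z\log z)$.

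Finally I would fold in the base/boundary costs: at the bottom of the recursion the remaining $O(1)$ columns are solved by brute-force enumeration over all $\theta\in[z]$, contributing $O(z)$; and the initial sort of $\mathcal{L}^{\dag}_a$ by weight plus the prefix-sum precomputation is $O(z\log z)$. Combining, $\chi_H(\cdot)$ for one slice is computed in $O(z\log z)=\widetilde{O}(z)$ time, which is the claim. The main obstacle is the level-by-level telescoping argument: one must be careful that the monotonicity inequality of Lemma~\ref{monoineq} is applied to \emph{consecutive} active columns at a given recursion level (whose $\zeta$-gaps are powers of two, not $1$), and check that the slope bound $\le 1$ together with the integrality of $\chi_H$ still yields search intervals whose \emph{total} length at each level is $O(z)$ rather than $O(z\log z)$ — otherwise one would only get an $O(z\log^2 z)$ bound. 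This is exactly analogous to the SMAWK-style "totally monotone matrix" search, and the key point is that the $+2$ slack per odd column sums to $O(z)$ because the number of odd columns at each level is $O(z)$, while the $\chi_H$-differences telescope independently of the level.
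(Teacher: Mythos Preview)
Your proposal is essentially the same argument as the paper's: the paper writes it as the recurrence $\mathcal{T}_H(c_H)=\mathcal{T}_H(c_H/2)+O(z)$ and solves to $O(z\log z)$, while you unroll the recursion and sum $O(z)$ work over $O(\log z)$ levels; these are equivalent. Your closing paragraph about the $\zeta$-gaps at deeper levels being powers of two is a valid observation that the paper leaves implicit in the recurrence, and your resolution (the additive slack scales with the gap but the number of columns at that level shrinks by the same factor, so the per-level total stays $O(z)$) is correct.

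One inaccuracy to fix: you assert that ``$\chi_H$ is also non-decreasing in $\zeta$, which follows from the same inequality read the other way.'' Lemma~\ref{monoineq} only says the forward-difference quotient is at most $1$; it gives no lower bound, so $\chi_H$ need not be monotone. Fortunately you do not need monotonicity. The containment $\chi_H(2i)\in[\chi_H(2i+1)-1,\;\chi_H(2i-1)+1]$ follows from two applications of the slope bound $\le 1$ (once with $\zeta_1=2i-1,\zeta_2=2i$ for the upper endpoint, once with $\zeta_1=2i,\zeta_2=2i+1$ for the lower), and the telescoping sum $\sum_i\bigl(\chi_H(2i-1)-\chi_H(2i+1)\bigr)$ collapses to a difference of two values of $\chi_H$, hence is $O(z)$ simply because $\chi_H$ takes values in $[0,z]$---no monotonicity required. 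With that correction your argument is complete and matches the paper's.
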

\begin{proof}
See Appendix \ref{appendixsliceddc}.
\end{proof}

\begin{figure}[H]
\centering 
\subfigure[Searching Space of $\phi^{X}_{\mathcal{L}^{\dag}_{a}}\otimes \phi^{X}_{S}$]{
\label{figsearchingspace}
\includegraphics[width=0.45\textwidth]{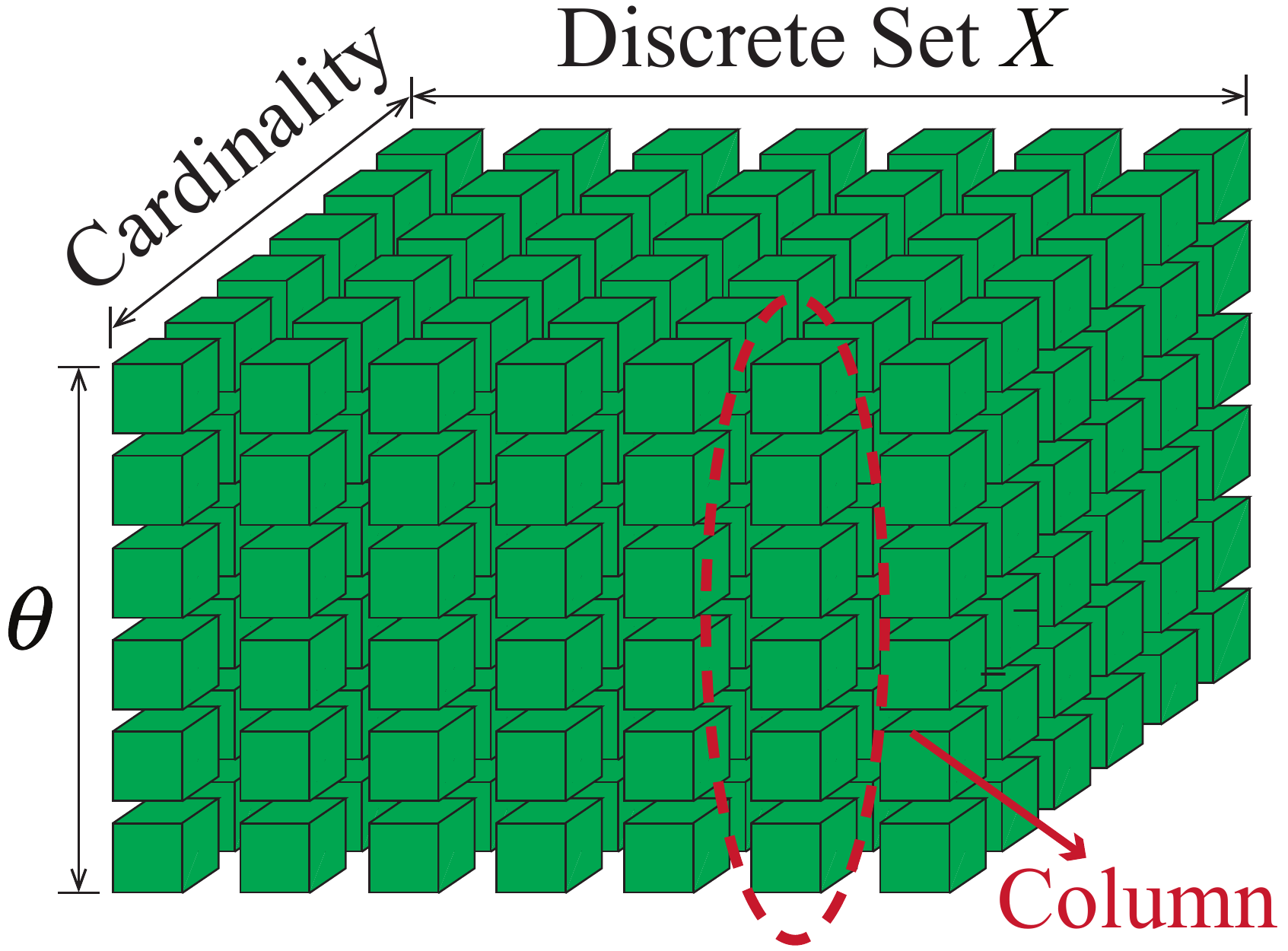}}
\subfigure[Divide-and-Conquer on $H$]{
\label{ddcslice}
\includegraphics[width=0.4\textwidth]{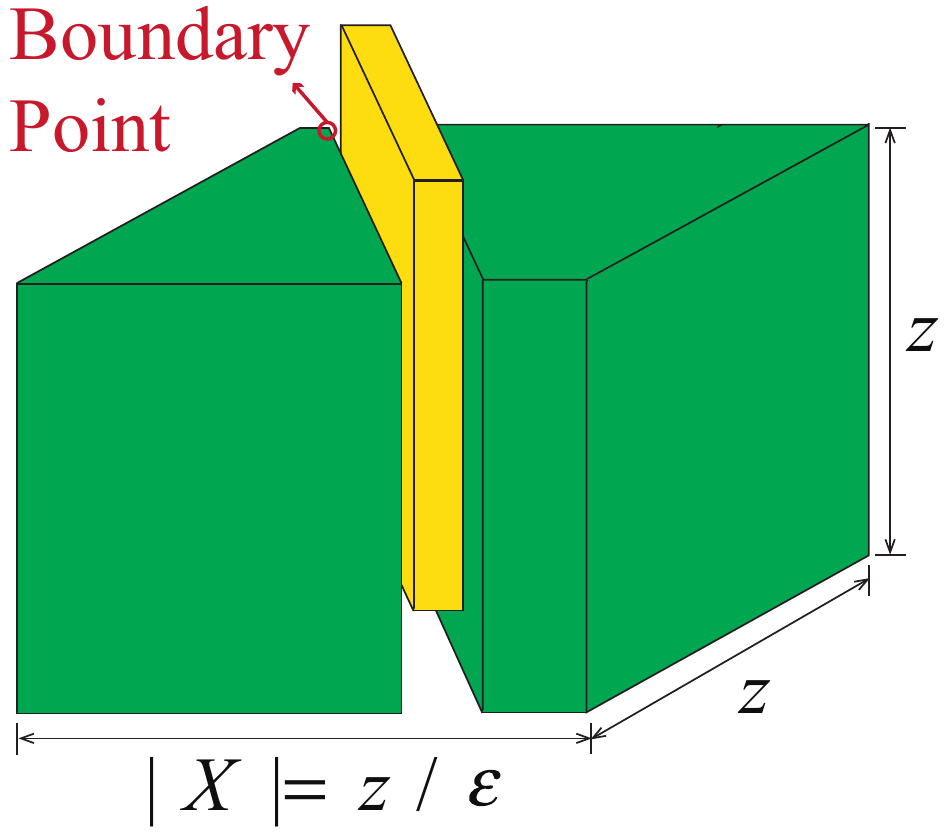}}
%\subfigure[Locate $\chi_{H}(\zeta_{2})$ ]{
%\label{optimalloca}
%\includegraphics[width=0.18\textwidth]{fig41new}}
\label{Fig.main}
\caption{Graphic Illustrations of the Convolution Operation}
\end{figure}

%We are ready to introduce our convolution algorithm, using the concepts and algorithms developed in the previous subsections. 
\paragraph{Fast convolution operation.}
%\noindent\textbf{Fast convolution operation.} 
We are ready to introduce the convolution algorithm, using the concepts and algorithms developed in the previous subsections. Details are specified in Appendix \ref{appendixconvolution}. When the convolution operation is specified as Algorithm~\ref{convolution}, the complexities of Algorithm~\ref{highlevel} is presented as follows.

%The following lemma summarizes the complexity of our algorithm.
\begin{lemma}
\label{largetimecomplexity}
It takes $O(n)$ space and 
\begin{align*}
O(n+(z^{2}\log z/\varepsilon)\cdot \min \{ \log(1/\varepsilon)/\varepsilon, n\})=O(n)+\widetilde{O}(\min\{z^{2}/\varepsilon^{2},nz^{2}/\varepsilon\})
\end{align*}
time to complete the convolution operation.
\end{lemma}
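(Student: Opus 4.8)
The plan is to bound the running time of Algorithm~\ref{highlevel} by summing the cost of the $r_{\mathcal{L}}$ convolution operations it performs, plus the $O(n)$ cost of reading the input and the one-time preprocessing (sorting each $\mathcal{L}^{\dag}_{i}$ by weight, which by Proposition~\ref{upperboundsmall} costs $O(z\log z)$ per class and $O(r_{\mathcal{L}} z \log z)$ in total). For the space bound, I would argue that each intermediate function $\phi^{X}_{\cup_{j\le i}\mathcal{L}^{(j)}}$ is supported on $X\times[z]$, so storing one such table takes $O(|X|\cdot z)=O(z^{2}/\varepsilon)$; but since the convolutions are performed sequentially we only need the current and previous tables, and after charging the $O(n)$ for the item list this fits inside $O(n)$ only if $z^{2}/\varepsilon = O(n)$ — so I should be careful here and either state the space as $O(n + z^{2}/\varepsilon)$ or observe that the large-item routine is only invoked on the small instance where the claimed $O(n)$ already subsumes it; I would follow whichever convention the preceding sections set up, most likely folding the $z^2/\varepsilon$ storage into the overall $O(n+z^2/\varepsilon)$ accounting and only claiming $O(n)$ \emph{extra} space here.

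The core of the time estimate is a single convolution $(\phi^{X}_{\cup_{j<i}\mathcal{L}^{(j)}}\otimes\phi^{X}_{\mathcal{L}^{\dag}_{i}})^{X}$. By the slice decomposition in~\eqref{slicedef}, the cube $X\times[z]$ is partitioned into $O(|X|)=O(z/\varepsilon)$ parallel slices $H$, and by Proposition~\ref{sliceddc} the divide-and-conquer routine computes all optimum indices $\chi_{H}(\cdot)$ — and hence all convolution values — on one slice in $O(z\log z)$ time. Multiplying, one convolution costs $O((z/\varepsilon)\cdot z\log z)=O(z^{2}\log z/\varepsilon)$. There are $r_{\mathcal{L}}$ such operations, and by~\eqref{rupperbound} (restricted to $\mathcal{L}$) we have $r_{\mathcal{L}}=O(\min\{\log(K/\varepsilon)/\varepsilon, n\}) = O(\min\{\log(1/\varepsilon)/\varepsilon, n\})$ up to the logarithmic refinement, since after geometric rounding the number of large classes is $O(\log(1/\varepsilon)/\varepsilon)$ (the profit ratio across $\mathcal{L}$ is $1/\varepsilon$) and also trivially $O(n)$. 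Hence the total is $O(n) + O\!\left(r_{\mathcal{L}}\cdot z^{2}\log z/\varepsilon\right) = O\!\left(n + (z^{2}\log z/\varepsilon)\cdot\min\{\log(1/\varepsilon)/\varepsilon, n\}\right)$, which is exactly the claimed bound; absorbing the $\log z$ and $\log(1/\varepsilon)$ factors into $\widetilde{O}(\cdot)$ and using $z\le 1/\varepsilon$ turns the second term into $\widetilde{O}(\min\{z^{2}/\varepsilon^{2}, nz^{2}/\varepsilon\})$.

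The main obstacle I anticipate is being precise about which parameter controls the number of slices and the number of classes simultaneously, so that the two $\min\{\cdot,n\}$ factors compose correctly rather than being double-counted: the $\min\{z/\varepsilon,\ldots\}$ on $|X|$, the $\min\{\cdot,n\}$ on $r_{\mathcal{L}}$, and the constraint $z=\min\{K,1/\varepsilon\}$ all interact, and one must verify that the honest product never exceeds the stated expression in any regime (in particular the regime where $r_{\mathcal{L}}$ saturates at $n$ while $z/\varepsilon$ does not, versus the reverse). A secondary technical point is confirming that the discretization set $X=\{i\varepsilon\mathrm{OPT}/z : i\in[z/\varepsilon]\}$ fixed just before Lemma~\ref{monoineq} has $|X|=z/\varepsilon$ and $\delta_X = \varepsilon\mathrm{OPT}/z$, so that Lemma~\ref{continuityvarphi} gives the required $O(\varepsilon\mathrm{OPT})$ accuracy with this $|X|$ — this ties the complexity count to the correctness guarantee and should be stated explicitly. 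Beyond that, the argument is a routine aggregation of Proposition~\ref{sliceddc} over slices and classes.
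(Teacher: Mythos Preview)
Your proposal is correct and follows essentially the same argument as the paper: one convolution costs $O(|X|\cdot z\log z)=O(z^{2}\log z/\varepsilon)$ by Proposition~\ref{sliceddc} over the $O(|X|)$ slices, and there are $r_{\mathcal{L}}=O(\min\{\log(1/\varepsilon)/\varepsilon,n\})$ convolutions, plus an $O(n)$ setup for the base functions $\phi_{\mathcal{L}^{\dag}_i}$. Your caution about the space bound is well placed --- the paper's proof of this lemma only accounts for the $O(n)$ storage of the base functions and defers the $O(z^{2}/\varepsilon)$ intermediate-table storage to the proof of Theorem~\ref{mainalgoguaran}, so your instinct to fold it into the overall $O(n+z^{2}/\varepsilon)$ accounting is exactly right.
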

\begin{proof}
See Appendix \ref{appendixle}.
\end{proof}

Generally speaking, for given $p\in X$ and $k\in \mathbb{Z}^{+}$, it requires $O(z\cdot|X|)$ arithmetic operations to compute $(\phi_{S_{1}}\otimes \phi_{S_{2}})(p,k)$, if we enumerate all possible pairs of $(p_{1},k_{1})$ in equation (\ref{discreteprofit}), which further results in a total complexity of $O(z^{2}|X|^{2})$ for operator $\otimes$. Compared with our Algorithm, this is unnecessarily inefficient, since it restarts all the arithmetic operations when the input parameters varies.

\section{Continuous Relaxation for Small Items}\label{smallputting}

In Section~\ref{largeitemsubsection} we have shown how to approximately select the most profitable large items under any given budget and cardinality constraints. One important task left is to solve the subproblem with only small items involved. In this section we show how to approximately solve this subproblem efficiently. Similar to Definition \ref{defInverseWeightFunction}, the profit function of small items, $\varphi_{\mathcal{S}}(\cdot,\cdot): \mathbb{R}^{+} \times [K]\rightarrow \mathbb{R}^{+}$, is given by $\varphi_{\mathcal{S}}(\omega,k)=\max\{\sum_{e\in \mathcal{S}}{p_{e}x_{e}}|\sum_{e\in \mathcal{S}}{x_{e}}\leq k, \sum_{e\in \mathcal{S}}{w_{e}x_{e}}\leq \omega,x_{e}\in \{0,1\} \}$.
The main spirit of our approach for small items is similar to that of Section \ref{largeitemsubsection}, \ie, find a new function $\varphi^{\dag}_{\mathcal{S}}$, which is a good approximation of $\varphi_{\mathcal{S}}$ and is economical in computations. 

One question that may arise is the following: can the methods in Section \ref{largeitemsubsection} still work for the small item set $\mathcal{S}$, \ie, can we apply Algorithm \ref{highlevel} over $\mathcal{S}$ and use the output discrete function as an approximation of $\varphi_{\mathcal{S}}$? It can be verified that $O(n)+\widetilde{O}(K^{2}/\varepsilon^{2})$ time is required, which is significantly high especially when $K$ is large, and fails to provide the desired complexity result. This is because that there could be many more small items than large items, which will result in a larger searching space. 

To construct the approximation function $\varphi^{\dag}_{\mathcal{S}}$, we turn to the continuous relaxation of the subproblem, as the continuous optimization problem is much easier to deal with. More importantly, the boundness of small item profits will ensure that the gap between optimal values of the two problems is sufficiently small. 

Our main result in this section is formally stated in the following Theorem \ref{continuousrelaxationmainlemma}. In the remaining of this section, we will show the correctness of Theorem \ref{continuousrelaxationmainlemma} step by step. We first present the details of $\varphi^{\dag}_{\mathcal{S}}$ and prove its approximation error in Section \ref{continuouserror}, then show the computational complexity of calculating $\varphi_{S}^{\dag}$ in Section \ref{computrelaxation}. The proof is summarized in Appendix  ~\ref{appendixcontinuousrelaxationmainlemma}. 
\begin{theorem}
\label{continuousrelaxationmainlemma}
There exists a function $\varphi^{\dag}_{\mathcal{S}}(\cdot,\cdot): \mathbb{R}^{+} \times [K]\rightarrow \mathbb{R}^{+}$, such that 
%the distance between $\varphi^{\dag}_{\mathcal{S}}$ and $\varphi_{\mathcal{S}}$ 
$|\varphi^{\dag}_{\mathcal{S}}(\omega, k)-\varphi_{\mathcal{S}}(\omega, k)|=O(\varepsilon \mathrm{OPT})$ for any $\omega$ and $k$. In addition, for any weight set $\mathcal{W}$ of size $O(1/\varepsilon)$ and cardinality bound set $\mathcal{K}$ of size $O(z)$, $\mathsf{Q}_{\mathcal{S}}=\{\varphi^{\dag}_{\mathcal{S}}(\omega,k)| \omega \in \mathcal{W}, k\in \mathcal{K}\}$ can be computed within  $\widetilde{O}(n+z^{4}+\min\{z^{2}/\varepsilon^{2},nz/\varepsilon\})$ time, while requiring $O(z/\varepsilon)$ space.
\end{theorem}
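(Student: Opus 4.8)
# Proof Proposal for Theorem~\ref{continuousrelaxationmainlemma}

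The plan is to construct $\varphi^{\dag}_{\mathcal{S}}$ by relaxing the integrality constraints $x_e \in \{0,1\}$ to $x_e \in [0,1]$ and solving the resulting continuous problem in closed form, then quantifying the rounding loss. First I would fix a cardinality bound $k$ and a budget $\omega$ and observe that the LP relaxation $\max\{\sum_{e \in \mathcal{S}} p_e x_e : \sum x_e \le k,\ \sum w_e x_e \le \omega,\ x_e \in [0,1]\}$ is a two-constraint packing LP whose optimal solution is nearly integral (an extreme point has at most two fractional coordinates). Because every small item satisfies $p_e \le \varepsilon \mathrm{OPT}$, dropping those at most two fractional items costs at most $2\varepsilon\mathrm{OPT} = O(\varepsilon\mathrm{OPT})$; this yields the approximation bound $|\varphi^{\dag}_{\mathcal{S}}(\omega,k) - \varphi_{\mathcal{S}}(\omega,k)| = O(\varepsilon\mathrm{OPT})$ in one direction, while the reverse inequality $\varphi_{\mathcal{S}} \le \varphi^{\dag}_{\mathcal{S}}$ (up to the rounding of profits already incurred in Definition~\ref{profitsim}) is immediate since the continuous problem relaxes the integral one. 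The subtlety to handle carefully is that $\varphi^{\dag}_{\mathcal{S}}$ must be a genuine \emph{function} computable uniformly, not just a pointwise optimum, so I would define it explicitly via the greedy/threshold structure: sort small items by profit-to-weight ratio within each profit class $\mathcal{S}^{\dag}_i$ (items in a class share a common profit after geometric rounding, so sorting by ratio is sorting by weight), and express the LP optimum as a piecewise-linear function of $(\omega, k)$ determined by which classes are fully taken, partially taken, or skipped.

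For the complexity claim, the key is exploiting the redundancy across the $O(1/\varepsilon)$ weights in $\mathcal{W}$ and $O(z)$ cardinalities in $\mathcal{K}$. I would first preprocess: reading and bucketing the $n$ items into the $r = \widetilde{O}(\min\{1/\varepsilon, n\})$ classes takes $O(n)$ time, and by Proposition~\ref{upperboundsmall} we may assume $|\mathcal{S}| = \widetilde{O}(\min\{K/\varepsilon, n\})$. Within each class I would keep only a prefix-sum summary (cumulative weights and counts of items sorted by weight), which suffices to answer "cheapest $t$ items from class $i$" in $O(1)$ time; building all summaries costs $\widetilde{O}(|\mathcal{S}|) = \widetilde{O}(\min\{K/\varepsilon, n\})$ after the $O(|\mathcal{S}|\log|\mathcal{S}|)$ sort. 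Then evaluating $\varphi^{\dag}_{\mathcal{S}}(\omega, k)$ for a single $(\omega, k)$ amounts to a greedy sweep over the $r$ classes in decreasing profit order, filling each class as much as budget and remaining cardinality allow; a naive sweep is $O(r)$ per query, giving $O(|\mathcal{W}| \cdot |\mathcal{K}| \cdot r) = \widetilde{O}((1/\varepsilon)\cdot z \cdot \min\{1/\varepsilon, n\}) = \widetilde{O}(\min\{z/\varepsilon^2, nz/\varepsilon\})$. The extra $z^4$ and $z^2/\varepsilon^2$ terms in the stated bound suggest that a finer computation (perhaps resolving the at-most-two fractional items, or amortizing the sweep across monotone sequences of $(\omega,k)$ to reach the claimed "$O(z/\varepsilon)$ time on average" mentioned in the technique overview) is needed to handle the regime where $z$ is small relative to $1/\varepsilon$; I would organize the $z$ cardinality values and $1/\varepsilon$ weight values on a grid and process queries in sorted order so that the greedy pointer advances monotonically, collapsing the per-query cost.

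The main obstacle I anticipate is the complexity bookkeeping rather than the correctness: showing that $\mathsf{Q}_{\mathcal{S}}$ can be filled in within the claimed time requires a careful amortization argument proving that, as $(\omega, k)$ ranges over the structured grid $\mathcal{W} \times \mathcal{K}$, the total work of all greedy sweeps telescopes to $\widetilde{O}(z/\varepsilon \cdot |\mathcal{K}|)$ plus lower-order terms, and matching this against the three-way minimum $\widetilde{O}(n + z^4 + \min\{z^2/\varepsilon^2, nz/\varepsilon\})$ needs a case split on whether $K \le 1/\varepsilon$ (so $z = K$) or $K > 1/\varepsilon$ (so $z = 1/\varepsilon$), with the $n$-dependent branch of the minimum kicking in when $n$ is the binding bound on $|\mathcal{S}|$ or $r$. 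The space bound $O(z/\varepsilon)$ is the cleanest part: it is just the size of one row of the summary table plus $O(z/\varepsilon)$ for storing the output matrix if we stream it out, reusing $O(z/\varepsilon)$ working memory, provided we do not materialize $|\mathcal{S}|$ explicitly but process items in a streaming pass during bucketing. I would assemble the final statement by combining the approximation bound from the first paragraph with the complexity analysis from the second, deferring the detailed amortization to the appendix as the theorem indicates.
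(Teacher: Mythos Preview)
Your approximation-error argument is correct and matches the paper exactly: define $\varphi^{\dag}_{\mathcal{S}}$ as the LP relaxation $\Upsilon_1$, observe that an extreme point has at most two fractional coordinates, and use $p_e \le \varepsilon\mathrm{OPT}$ to bound the loss. That part is fine.

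The gap is in the complexity argument, and it is not just bookkeeping. First, a minor point: your ``greedy sweep over the $r$ classes in decreasing profit order, filling each class as much as budget and remaining cardinality allow'' does not solve the two-constraint LP $\Upsilon_1$; with both a budget and a cardinality constraint the optimum is not determined by profit order (a heavy high-profit item can lose to several light lower-profit items). The paper instead invokes a linear-time LP solver \`a la Megiddo/Caprara for each $(\omega,k)$, at cost $O(|\mathcal{S}|)$ per query.

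The substantive missing idea is what to do when $K > 1/\varepsilon$. In that regime $z = 1/\varepsilon$ but $|\mathcal{S}| = \widetilde{O}(\min\{K/\varepsilon,n\})$ can be as large as $K/\varepsilon$, so solving the LP from scratch for each of the $O(z/\varepsilon)$ grid points costs $\widetilde{O}((z/\varepsilon)\cdot K/\varepsilon) = \widetilde{O}(K/\varepsilon^3)$, which is \emph{not} bounded by the claimed $\widetilde{O}(z^4 + \min\{z^2/\varepsilon^2, nz/\varepsilon\}) = \widetilde{O}(1/\varepsilon^4 + \min\{1/\varepsilon^4, n/\varepsilon^2\})$ whenever $K/\varepsilon < n$ and $K \gg 1/\varepsilon$. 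No amortization or monotone-pointer trick over the grid closes this gap, because the per-query input size $|\mathcal{S}|$ itself carries the unwanted $K$ factor. The paper therefore introduces a \emph{second} relaxation $\Upsilon_2$ used precisely when $K > 1/\varepsilon$: it splits $\mathcal{S}$ by a weight threshold $\varepsilon\omega/K$, handles the sub-threshold items by a pure top-$\ell$ profit selection $\Upsilon_3$ (no budget constraint needed there), and on the remaining items rounds weights geometrically so that only $\widetilde{O}(1/\varepsilon^2)$ item \emph{types} survive; it then dualizes the budget constraint and binary-searches the Lagrange multiplier over a precomputed set $\mathcal{B}'$ of size $\widetilde{O}(1/\varepsilon^3)$, and binary-searches over $\ell$ using concavity of $\Upsilon_3 + \Upsilon_4$. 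This is where the $z^4 = 1/\varepsilon^4$ term actually comes from (sorting $\mathcal{B}'$ plus $\widetilde{O}(1/\varepsilon^2)$ work per query over the $z/\varepsilon$ grid), and it is what eliminates the dependence on $K$. Your proposal correctly senses that the $z^4$ term signals something extra in the large-$K$ regime, but the concrete mechanism---weight rounding to collapse item types, plus Lagrangian dualization with a finite multiplier set---is the missing ingredient.
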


\subsection{Relaxation function design and approximation error analysis}\label{continuouserror}
We first introduce the two building block functions in our construction.
%in which the two building block functions $\Upsilon_{i}\;(i=1,2)$ are specified in the following definition.
	%\vspace{-0.2cm}
\begin{definition}[Definition of $\Upsilon_{1}(\cdot,\cdot)$ and $\Upsilon_{2}(\cdot,\cdot)$]\label{defcontin}
For any weight $\omega$ and cardinality $\ell$, function $\Upsilon_{1}(\cdot,\cdot): \mathbb{R}^{+} \times [K]\rightarrow \mathbb{R}^{+}\;(i=1,2)$ is defined as
\begin{align}\label{linrelaxation}
\Upsilon_{1}(\omega, k)=\max\Big\{\sum_{e\in \mathcal{S}}{p_{e}x_{e}}\Big|\sum_{e\in \mathcal{S}}{w_{e}x_{e}}\leq \omega, \sum_{e\in \mathcal{S}}{x_{e}}\leq k, x_{e}\in[0,1] \Big\}.
\end{align}	
The second relaxation function is constructed as the maximum summation of functions $\Upsilon_{3}$ and $\Upsilon_{4}$,
\begin{align}\label{secondrelax}
\Upsilon_{2}(\omega, k)=\max_{0\leq \ell \leq k}\{\Upsilon_{3}(\omega,\ell)+\Upsilon_{4}(\omega,\ell, k)\},
\end{align}
where $\Upsilon_{3}(\cdot,\cdot): \mathbb{R}^{+} \times [K]\rightarrow \mathbb{R}^{+}$ and $\Upsilon_{4}(\cdot,\cdot,\cdot): \mathbb{R}^{+} \times [K]  \times [K] \rightarrow \mathbb{R}^{+}$ are given by
\begin{align}\label{upsilon3}
\Upsilon_{3}(\omega, \ell)=\max\Big\{\sum_{e\in T}p_{e}\Big|T\subseteq \mathcal{S}_{1}(\omega), |T|\leq \ell \Big\},
\end{align}
\begin{align}
\label{varphi2tdef}
\Upsilon_{4}(\omega, \ell, k)= \max_{x_{e}\in [0, 1]}\Big\{\sum_{e\in \mathcal{S}_{2}(\omega)}{p_{e}x_{e}}\Big|\sum_{e\in \mathcal{S}_{2}(\omega)}{x_{e}}\leq k-\ell, \sum_{e\in \mathcal{S}_{2}(\omega)}{\bar{w}_{e}\cdot x_{e}}\leq (1-\varepsilon)\cdot\omega \Big\}.
\end{align}
Here $\mathcal{S}_{1}(\omega)=\{e\in\mathcal{S}|\omega_{e}\leq \varepsilon \omega/K\}$ represents the set of elements in $\mathcal{S}$ with weight less than threshold $\varepsilon \omega/K$, set $\mathcal{S}_{2}(\omega)=\{e\in \mathcal{S}|w_{e}\leq\omega\}\setminus \mathcal{S}_{1}(\omega)$. The modified weight $\bar{\omega}_{e}$ in (\ref{varphi2tdef}) is given as $\bar{w}_{e}= \omega_{e} \cdot (1+\varepsilon)^{\lceil \log_{(1+\varepsilon)}{(\frac{\varepsilon K w_{e}}{\omega})}\rceil}/(K\varepsilon)$, where $\lceil \cdot \rceil$ refers to the ceiling function.	

%\begin{align*}
%\Upsilon_{2}(\omega,k)=\max_{0\leq t \leq k}\Big\{ \widetilde{\varphi}^{(2)}_{\mathcal{S}_{\omega}}(\omega,k-t)+\widetilde{\varphi}^{(2)}_{\bar{\mathcal{S}}_{\omega}}(t)\Big\}.
%\end{align*}
\end{definition}
	
The first function $\Upsilon_{1}(\omega, k)$ is the most natural linear programming relaxation of $\varphi^{\dag}_{\mathcal{S}}$, where all the integer variables are relaxed to real numbers in $[0,1]$. In the second function $\Upsilon_{2}(\omega, k)$, we only relax variables corresponding to elements in $\mathcal{S}_{\omega}$, while element weights are rounded to integer powers of $(1+\varepsilon)$, and the budget $\omega$ is scaled by a factor of $(1-\varepsilon)$.
	
In our algorithm, we let the approximation function
\begin{align*}
\varphi^{\dag}_{\mathcal{S}}(\omega,k)=\Upsilon_{1}(\omega,k)\cdot \mathbbm{1}_{\{K\leq \varepsilon^{-1}\}}+\Upsilon_{2}(\omega, k)\cdot \mathbbm{1}_{\{K> \varepsilon^{-1}\}}.
%\widetilde{\varphi}^{(1)}_{\mathcal{S}}(\omega,k)\cdot \mathbbm{1}_{\{K\leq \varepsilon^{-1}\}}+\widetilde{\varphi}^{(2)}_{\mathcal{S}}(\omega,k)\cdot \mathbbm{1}_{\{K> \varepsilon^{-1}\}},
\end{align*}	
The following lemma shows that $\varphi^{\dag}_{\mathcal{S}}$ provides a good approximation of $\varphi_{\mathcal{S}}$.
	%\vspace{-0.3cm}
\begin{lemma}\label{lemmaapproxi}
The differences between functions $\varphi^{\dag}_{\mathcal{S}}$ and $\varphi_{\mathcal{S}}$ is bounded as $|\varphi^{\dag}_{\mathcal{S}}(\omega,k)-\varphi_{\mathcal{S}}(\omega,k)|\leq 4\varepsilon \cdot\mathrm{OPT}$.
\end{lemma}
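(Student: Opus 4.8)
The plan is to split into the two cases $K\le\varepsilon^{-1}$ and $K>\varepsilon^{-1}$ that define $\varphi^{\dag}_{\mathcal{S}}$, and in each case sandwich the relaxation value between $\varphi_{\mathcal{S}}(\omega,k)$ and $\varphi_{\mathcal{S}}(\omega,k)+O(\varepsilon\,\mathrm{OPT})$. In both cases the lower bound $\varphi^{\dag}_{\mathcal{S}}(\omega,k)\ge\varphi_{\mathcal{S}}(\omega,k)$ is essentially free: every integral feasible solution of $\varphi_{\mathcal{S}}$ is feasible for $\Upsilon_1$ (which merely relaxes $x_e\in\{0,1\}$ to $[0,1]$), and for $\Upsilon_2$ one has to check that an integral optimal solution $T^{*}$ of $\varphi_{\mathcal{S}}(\omega,k)$, split as $T^{*}\cap\mathcal{S}_1(\omega)$ and $T^{*}\cap\mathcal{S}_2(\omega)$, is feasible for the pair $(\Upsilon_3,\Upsilon_4)$ after the weight rounding — here $\bar w_e\ge w_e$, so the scaled budget $(1-\varepsilon)\omega$ could be violated; this is exactly where the factor $(1-\varepsilon)$ in \eqref{varphi2tdef} must be shown to absorb the rounding blow-up, using that each $e\in\mathcal{S}_2(\omega)$ has $w_e>\varepsilon\omega/K$ so there are at most $K/\varepsilon$ of them... wait, more carefully, $\bar w_e\le (1+\varepsilon)w_e\cdot\frac{1}{K\varepsilon}\cdot(\varepsilon K/\omega\cdot w_e)^{?}$ — I would rather argue $\bar w_e = w_e\cdot(1+\varepsilon)^{\lceil\log_{1+\varepsilon}(\varepsilon K w_e/\omega)\rceil}/(K\varepsilon w_e)\cdot w_e$, i.e. $\bar w_e/w_e\in[\varepsilon Kw_e/\omega,(1+\varepsilon)\varepsilon Kw_e/\omega]/(K\varepsilon)\cdot(\omega/w_e)$; the point to extract is that $\sum\bar w_e x_e\le(1+\varepsilon)\cdot(\text{original weighted sum scaled})$, so choosing the budget $(1-\varepsilon)\omega$ keeps $T^{*}\cap\mathcal{S}_2(\omega)$ feasible since $(1+\varepsilon)(1-\varepsilon)\omega\le\omega$. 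I would spell this inequality out as the first real step.

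The second and harder direction is the upper bound $\varphi^{\dag}_{\mathcal{S}}(\omega,k)\le\varphi_{\mathcal{S}}(\omega,k)+O(\varepsilon\,\mathrm{OPT})$, i.e. rounding a fractional optimum of the relaxation to an integral knapsack solution while losing only $O(\varepsilon\,\mathrm{OPT})$. For the case $K\le\varepsilon^{-1}$: take an optimal basic solution $x$ of the LP \eqref{linrelaxation}; since the LP has only two constraints (budget and cardinality), a basic solution has at most two fractional coordinates, and dropping them loses at most $2\max_{e\in\mathcal{S}}p_e\le2\varepsilon\,\mathrm{OPT}$ by the definition of small items. That handles this branch with loss $\le2\varepsilon\,\mathrm{OPT}$. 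For the case $K>\varepsilon^{-1}$ (so $z=\varepsilon^{-1}$): here I would take an optimal configuration $(\ell^{*},x)$ achieving $\Upsilon_2(\omega,k)$, where $x$ is fractional only on $\mathcal{S}_2(\omega)$. The elements of $\mathcal{S}_1(\omega)$ contribute $\Upsilon_3$ integrally already. For the $\mathcal{S}_2(\omega)$ part, $x$ is a basic solution of a 2-constraint LP on the rounded weights, hence has $\le2$ fractional coordinates; rounding them down loses $\le2\varepsilon\,\mathrm{OPT}$, and the resulting integral set has true weight $\sum w_e\le\sum\bar w_e\le(1-\varepsilon)\omega\le\omega$ (so it is genuinely feasible for $\varphi_{\mathcal{S}}$ at budget $\omega$), and cardinality $\le k$. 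The last gap to close is that the rounded-weight constraint only bounded $\sum\bar w_e x_e\le(1-\varepsilon)\omega$ whereas the true weight is smaller, so feasibility is preserved — that is the inequality $w_e\le\bar w_e$, valid because the ceiling in the exponent rounds $\varepsilon K w_e/\omega$ up to a power of $(1+\varepsilon)$, making $\bar w_e=w_e\cdot(1+\varepsilon)^{\lceil\cdot\rceil}/(\varepsilon K w_e/\omega)\ge w_e$.

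Summing the two branches, the worst loss in either direction is $\le 2\varepsilon\,\mathrm{OPT}$ for the dropped fractional variables, plus possibly an additional $\le2\varepsilon\,\mathrm{OPT}$ coming from the $\Upsilon_2$ side where the geometric weight rounding effectively shrinks the usable budget (the $(1-\varepsilon)\omega$ scaling can cost one more small item's worth of profit in the comparison against $\varphi_{\mathcal{S}}(\omega,k)$, since a near-optimal integral solution for budget $\omega$ need not fit in budget $(1-\varepsilon)\omega$ after rounding weights up — recovering it costs removing $O(1)$ small items, i.e. $O(\varepsilon\,\mathrm{OPT})$); adding these gives the stated bound $4\varepsilon\,\mathrm{OPT}$. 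I expect the main obstacle to be the bookkeeping in the $K>\varepsilon^{-1}$ branch: one must simultaneously track (i) the integrality loss from the two fractional basic variables, (ii) the $w_e\le\bar w_e$ direction ensuring true feasibility at budget $\omega$, and (iii) the reverse direction $\bar w_e\le(1+\varepsilon)w_e\cdot(\text{something})$ ensuring that a good integral solution for $\varphi_{\mathcal{S}}$ survives into the feasible region of $\Upsilon_4$ after the $(1-\varepsilon)$ scaling — keeping the roles of the two $\varepsilon$-factors (weight rounding vs. budget scaling) straight is the delicate part, and is precisely why the construction uses $(1-\varepsilon)\omega$ rather than $\omega$ in \eqref{varphi2tdef}.
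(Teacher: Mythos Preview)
Your treatment of the $\Upsilon_1$ case and the \emph{upper} bound for $\Upsilon_2$ (rounding the basic LP optimum down and observing that the true weights satisfy $\sum w_e\le\sum\bar w_e\le(1-\varepsilon)\omega$, with the $\Upsilon_3$ items contributing at most $\ell^*\cdot\varepsilon\omega/K\le\varepsilon\omega$) is correct and matches the paper.

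The gap is in the \emph{lower} bound for $\Upsilon_2$. You assert it is ``essentially free'' and then try to show that the integral optimum $T^*$ of $\varphi_{\mathcal S}(\omega,k)$, restricted to $\mathcal S_2(\omega)$, is feasible for $\Upsilon_4$. It is not: since $\bar w_e\in[w_e,(1+\varepsilon)w_e]$, the rounded weight of $T^*\cap\mathcal S_2(\omega)$ can be as large as $(1+\varepsilon)\omega$, whereas the $\Upsilon_4$ budget is $(1-\varepsilon)\omega$. Your inequality $(1+\varepsilon)(1-\varepsilon)\omega\le\omega$ is pointing the wrong way here --- both the weight rounding and the budget scaling \emph{tighten} the constraint, so neither ``absorbs'' the other. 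Your fallback (``remove $O(1)$ small items'') also fails: the weight overshoot can be $\Theta(\varepsilon\omega)$, and items in $\mathcal S_2(\omega)$ can have weight as small as $\varepsilon\omega/K$, so you might need to discard $\Theta(K)$ items, costing $\Theta(K\varepsilon\,\mathrm{OPT})$.

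The paper's fix is a multiplicative scaling of the \emph{fractional} solution rather than an integral one. If $x$ is optimal for the LP $\Upsilon_1(\omega,k-\ell,\mathcal S_2(\omega))$ (original weights, budget $\omega$), then $(1-\varepsilon)x$ satisfies $\sum\bar w_e(1-\varepsilon)x_e\le(1+\varepsilon)(1-\varepsilon)\sum w_ex_e\le(1-\varepsilon^2)\omega$, which is still not quite $(1-\varepsilon)\omega$, so one scales once more (or equivalently notes that weight rounding and budget scaling each cost a factor $(1-\varepsilon)$), giving $\Upsilon_4(\omega,\ell,k)\ge(1-\varepsilon)^2\,\Upsilon_1(\omega,k-\ell,\mathcal S_2(\omega))$. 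Since $\Upsilon_1\le O(\mathrm{OPT})$, the multiplicative loss $(1-\varepsilon)^2\ge 1-2\varepsilon$ becomes an additive $O(\varepsilon\,\mathrm{OPT})$, and combining with $\Upsilon_3(\omega,|T^*\cap\mathcal S_1(\omega)|)\ge p(T^*\cap\mathcal S_1(\omega))$ yields $\Upsilon_2\ge\varphi_{\mathcal S}-4\varepsilon\,\mathrm{OPT}$. The point you are missing is that scaling an LP solution converts a multiplicative budget change into a multiplicative objective change, which is cheap here because the objective is bounded by $\mathrm{OPT}$; removing integral items does not have this property.
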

\begin{proof}
See Appendix \ref{appendixlemmaapproxi}.
\end{proof}
	
\subsection{Computing $\varphi^{\dag}_{\mathcal{S}}$ efficiently}\label{computrelaxation}
In this subsection, we consider how to compute set $\{\varphi^{\dag}_{\mathcal{S}}(\omega,k)| \omega \in \mathcal{W}, k\in \mathcal{K}\}$ efficiently, for any given $\mathcal{K}\in \mathbb{Z}^{|\mathcal{K}|}$ and $\mathcal{W}\in \mathbb{R}^{|\mathcal{W}|}$. We treat functions $\Upsilon_{3}$ and $\Upsilon_{4}$ separately. To compute relaxation $\Upsilon_{3}(\cdot,\cdot)$, it is worth noting that one straightforward approach is to utilize the linear time algorithm~\citep{megiddo1984linear, megiddo1993linear, caprara2000approximation} to solve equation (\ref{upsilon3}), for each pair of distinct parameters in $\mathcal{W}$ and $\mathcal{K}$. This will result in a total complexity of $O(|\mathcal{S}|\cdot |\mathcal{K}|\cdot |\mathcal{W}|)=O((z/\varepsilon)\cdot\min\{K/\varepsilon, n\})$, which has a high dependence on the parameter $K$.

%\noindent\textbf{Obtain the final solution set.} 

\subsubsection{Computing relaxation $\Upsilon_{2}(\cdot,\cdot)$}	
%\noindent\textbf{Computing Relaxation $\widetilde{\varphi}^{(2)}_{\mathcal{S}}(\cdot,\cdot)$.}\label{whenKislarge} 
For notational convience, we let $\Upsilon_{5}(\omega, \ell, k)=\Upsilon_{3}(\omega,\ell)+\Upsilon_{4}(\omega,\ell, k)$, then $\Upsilon_{2}(\omega, k)=\max_{0\leq \ell \leq k}{\Upsilon_{5}(\omega, \ell, k)}$ according to Definition~\ref{defcontin}. We first claim the following observation with regard to $\{\Upsilon_{5}(\omega, \ell, k)\}_{0\leq \ell\leq k}$, which enables us to compute $\Upsilon_{2}(\omega, k)$ for each fixed value of $\omega$ and $k$, via $O(\log k)$ calls to the routine of computing $\Upsilon_{5}(\omega, \ell, k)$.

\begin{lemma}[Concavity of $\{\Upsilon_{5}(\omega, \ell, k)\}_{0\leq \ell\leq k}$]\label{concavesequence}
Sequence $\{\Upsilon_{5}(\omega, \ell, k)\}_{0\leq \ell\leq k}$ is concave with respect to $\ell$, \ie, $\Upsilon_{5}(\omega, \ell_{1}, k)+\Upsilon_{5}(\omega, \ell_{2}, k)\leq 2\Upsilon_{5}(\omega, (\ell_{1}+\ell_{2})/2, k)$. Consequently, $\Upsilon_{2}(\omega, k)$ can be computed in $O(\mathcal{T}_{f}\cdot \log k)$ time, where $\mathcal{T}_{f}$ represents the worst case running time of computing $\Upsilon_{5}(\omega, \ell, k)$ under fixed values of $\omega, \ell, k$. 
\end{lemma}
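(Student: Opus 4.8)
\textbf{Proof plan for Lemma~\ref{concavesequence}.}

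The plan is to establish the concavity of $\ell \mapsto \Upsilon_{5}(\omega,\ell,k)$ by exhibiting, for any two feasible cardinality splits $\ell_{1}$ and $\ell_{2}$, a feasible split at $(\ell_{1}+\ell_{2})/2$ whose value is at least the average of the two. First I would fix $\omega$ and $k$ throughout and treat $\Upsilon_{3}(\omega,\cdot)$ and $\Upsilon_{4}(\omega,\cdot,k)$ as the two summands. The key structural fact is that each of these two functions is itself concave in its cardinality argument. For $\Upsilon_{3}(\omega,\ell)$, this is the classical statement that the maximum total profit of the $\ell$ largest-profit elements of a fixed set $\mathcal{S}_{1}(\omega)$ is concave in $\ell$: the marginal gain from the $\ell$-th element is the $\ell$-th largest profit, a non-increasing sequence, so the running sum is concave. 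For $\Upsilon_{4}(\omega,\ell,k)$ — which is a fractional (LP-relaxed) knapsack value with cardinality budget $k-\ell$ and a scaled weight budget — concavity in $\ell$ (equivalently, in the cardinality slack $k-\ell$) follows because relaxing the cardinality constraint of an LP by one unit has non-increasing marginal value; formally, one can take convex combinations of optimal fractional solutions at $k-\ell_{1}$ and $k-\ell_{2}$ to get a feasible fractional solution at $k-(\ell_{1}+\ell_{2})/2$ with the averaged objective, since all constraints in (\ref{varphi2tdef}) are linear and the feasible region is convex.

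The second step combines the two: if $\ell^{*}_{1}$ attains the outer max at some pair and $\ell^{*}_{2}$ at another, concavity of the sum of two concave functions gives
\begin{align*}
\tfrac12\Upsilon_{5}(\omega,\ell_{1},k)+\tfrac12\Upsilon_{5}(\omega,\ell_{2},k)\leq \Upsilon_{3}\big(\omega,\tfrac{\ell_{1}+\ell_{2}}{2}\big)+\Upsilon_{4}\big(\omega,\tfrac{\ell_{1}+\ell_{2}}{2},k\big)=\Upsilon_{5}\big(\omega,\tfrac{\ell_{1}+\ell_{2}}{2},k\big),
\end{align*}
which is exactly the claimed midpoint inequality. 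One subtlety I would address here is that $\ell$ ranges over integers, so $(\ell_{1}+\ell_{2})/2$ need not be an integer; the cleanest fix is to either interpolate $\Upsilon_{3}$ and $\Upsilon_{4}$ piecewise-linearly between integer points (their natural concave extensions) or to verify the discrete midpoint-concavity inequality directly for $\ell_{2}=\ell_{1}+2$, i.e., $\Upsilon_{5}(\omega,\ell_{1},k)+\Upsilon_{5}(\omega,\ell_{1}+2,k)\leq 2\Upsilon_{5}(\omega,\ell_{1}+1,k)$, which by the argument above reduces to the discrete concavity of the prefix-sum-of-sorted-profits (trivial) and of the integer fractional-knapsack value in its cardinality bound.

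The final step is the algorithmic consequence: a concave sequence $\{\Upsilon_{5}(\omega,\ell,k)\}_{0\leq \ell\leq k}$ is unimodal, so its maximum over $\ell$ can be located by ternary/binary search in $O(\log k)$ evaluations of $\Upsilon_{5}$, each costing $\mathcal{T}_{f}$ time, giving the $O(\mathcal{T}_{f}\cdot\log k)$ bound. I expect the main obstacle to be the concavity of $\Upsilon_{4}$ in $\ell$: one must argue carefully that the LP value with a reduced cardinality budget degrades concavely even though the \emph{weight} budget $(1-\varepsilon)\omega$ and the rounded weights $\bar w_{e}$ stay fixed — the convex-combination-of-optima argument handles this, but it requires noting that a convex combination of two $\{[0,1]\}$-feasible vectors is again $\{[0,1]\}$-feasible and respects both linear budgets, and that the cardinality budget of the combination is the corresponding convex combination $k-(\ell_{1}+\ell_{2})/2$. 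Everything else is either classical (concavity of sorted prefix sums) or purely bookkeeping.
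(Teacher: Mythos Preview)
Your proposal is correct and follows essentially the same route as the paper: both show separately that $\Upsilon_{3}$ is concave in $\ell$ via the non-increasing marginal (the $\ell$-th largest profit) and that $\Upsilon_{4}$ is concave in $\ell$ via the convex-combination-of-optimal-fractional-solutions argument, then add the two and finish with a binary/ternary search over the resulting unimodal sequence. Your explicit remark on the integer-midpoint subtlety (reducing to the case $\ell_{2}=\ell_{1}+2$) is in fact exactly what the paper does implicitly, so there is no gap.
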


\begin{proof}
We first show that $\{\Upsilon_{3}(\omega,\ell)\}_{\ell\in [k]}$ is a concave sequence. Note that the first order difference 
$\Delta \Upsilon_{3}(\omega,\ell)=\Upsilon_{3}(\omega,\ell)-\Upsilon_{3}(\omega,\ell-1)$, which is equal to the $\ell$-th largest profit in $\mathcal{S}_{1}(\omega)$. Thus, the first order sequence $\{\Delta \Upsilon_{3}(\omega,\ell)\}_{\ell \in[k]}$ is non-increasing and concavity of $\{\Upsilon_{3}(\omega,\ell)\}_{\ell\in [k]}$ follows. For sequence $\{\Upsilon_{4}(\omega,\ell, k)\}_{\ell\in [k]}$, we use $\mathbf{x}^{*}_{\omega, \ell, k}$ to denote the optimal fractional solution to (\ref{varphi2tdef}), $\Upsilon_{4}(\omega,\ell, k)\geq [\Upsilon_{4}(\omega,\ell-1, k)+\Upsilon_{4}(\omega,\ell+1, k)]/2$ holds, since $(\mathbf{x}^{*}_{\omega, k-\ell+1, k}+\mathbf{x}^{*}_{\omega, k-\ell-1, k})/2$ is a feasible solution to (\ref{varphi2tdef}) under cardinality bound $k-\ell$. The concavity of $\Upsilon_{5}$ follows from the fact that sequence concavity is preserved under summation.
		
As for the time complexity, notice that sequence concavity implies monotonicity of the first order difference sequence $\{\Delta \Upsilon_{5}(\omega, \ell, k)\}_{0\leq \ell\leq k}$. Hence  $\ell^{*}=\argmax_{\ell\in [k]}{\Upsilon_{5}(\omega, \ell, k)}$ can be derived via binary search, using the sign of $\Delta \Upsilon_{5}(\omega, \ell, k)$ as indication information. For each fixed $\omega$ and $\ell$, $\Upsilon_{2}(\omega, \ell)$ can be computed in $O(\mathcal{T}_{f}\cdot \log k)$ time. The proof is complete.
\end{proof}

\paragraph{Computing $\Upsilon_{5}(\omega, \ell, k)$} At the current stage, we have shown that $\Upsilon_{2}(\omega, k)$ can be computed within the same order of time (up to a factor of $O(\log k)$) as computing  $\Upsilon_{5}(\omega, \ell, k)$. In the following, we present our two subroutines of calculating  $\Upsilon_{3}(\omega, \ell)$ and $\Upsilon_{4}(\omega, \ell, k)$.

\begin{itemize}	
\item \emph{Calculating $\Upsilon_{3}(\omega, \ell)$}. Let weight set $\mathcal{W}=\{\omega_{1}\leq \omega_{2}\cdots \leq \omega_{|\mathcal{W}|}\}$. We partition and store the small item set $\mathcal{S}=\cup_{i=1}^{|\mathcal{W}|-1}{\mathcal{S}_{i}}$, where $\mathcal{S}_{1}=\{e\in \mathcal{S}|w_{e}\in [\omega_{1},\omega_{2}]\}$ and for $i \in [2, |\mathcal{W}|-1]$,
\begin{align}
\mathcal{S}_{i}=\Big\{e\in \mathcal{S}\Big|w_{e}\in (\omega_{i},\omega_{i+1}]\Big\},
\end{align}
which takes $O(|\mathcal{S}|\cdot \log |\mathcal{W}|)=\widetilde{O}(|\mathcal{S}|)$ time and $O(|\mathcal{S}|)$ space. Without loss of generality, we assume that items in sets $\mathcal{S}=\{e_{1},\ldots,e_{|\mathcal{S}|}\}$ and $\mathcal{S}_{i}=\{ e_{\iota_{i,1}}, e_{\iota_{i,2}},\cdots,e_{\iota_{i, |\mathcal{S}_{i}|}}\}$ are in non-increasing order of profit values, as sorting takes $\widetilde{O}(|\mathcal{S}|)$ time, which is a lower order term. We further store the partial summation sequence $\{\pi(i, j)\}_{i \in [|\mathcal{W}|-1], j\in [|\mathcal{S}_{i}|]}$, where 
\begin{align}\label{partitalsum}
\pi(i,j)=\sum_{t=1}^{j}{p_{e_{\iota_{i,t}}}}\end{align}
represents the total profits of the first $j$ items in $\mathcal{S}_{i}$. This procedure contributes a lower order term of $O(n)$ to the time and space complexity.

To compute function $\Upsilon_{3}(\omega_{i}, \ell)$, we first figure out the index of the $\ell$-th largest item in $\cup_{j \in [i]}{\mathcal{S}_{j}}$, again using binary search. Then $\Upsilon_{3}(\omega, \ell)$ can be computed based on the pre-computed partial summations defined in (\ref{partitalsum}), which takes $O(|\mathcal{W}|)$ time, under given values of $\omega$ and $\ell$. Hence the total complexity of computing $\Upsilon_{3}(\omega, \ell)$ for $\omega\in \mathcal{W}$ and $k\in \mathcal{K}$ is in the order of 
%of $\{\Upsilon_{3}(\omega, k)\}_{\omega \in \mathcal{W}, k\in [K]}$ is in the order of
\begin{align*}
&O(|\mathcal{K}|)	\cdot O(|\mathcal{W}|)+\sum_{j\in[|\mathcal{W}|-1]}{O(|\mathcal{S}_{j}|\cdot \log |\mathcal{S}}_{j-1}|)\notag\\
=&\widetilde{O}(n+z/\varepsilon).
\end{align*}

\item \emph{Computing $\Upsilon_{4}(\omega, \ell, k)$}. We first remark that there are $O(\log^{2} (K/\varepsilon)/\varepsilon^{2})$ types of elements in $\mathcal{S}_{2}(\omega)$, here two elements are of the same type, if and only if both their weights and profits are identical to each other. This is because that the profits and weights are rounded into integer powers of $(1+\varepsilon)$, hence there are $O(\log (K/\varepsilon)/\varepsilon)$ types of profits and weights.

Now we dualize the budget constraint through a non-negative Lagrangian multiplier $\mu$. It holds that $\Upsilon_{4}(\omega, \ell,k)=\min_{\mu\geq 0}{L(\mu,\omega,\ell,k)}$, where
\begin{align*}
&L(\mu,\omega,\ell, k)\\
=&\max_{x_{e}\in [0,1]}\Big\{\sum_{e\in \mathcal{S}_{2}(\omega)}{p_{e} x_{e}}+\mu \Big(\omega-\sum_{e\in \mathcal{S}_{2}(\omega)}{w_{e}x_{e}}\Big) \Big|\sum_{e\in \mathcal{S}_{2}(\omega)}{x_{e}}\leq k-\ell \Big\}\\
=&\max_{x_{e}\in [0,1]}\Big\{\mu\omega+\sum_{e\in \mathcal{S}_{2}(\omega)}{p^{\prime}_{e}(\mu)\cdot x_{e}}\Big|\sum_{e\in \mathcal{S}_{2}(\omega)}{x_{e}}\leq k-\ell\Big\},	
\end{align*}
and profit $p^{\prime}_{e}(\mu)=p_{e}-\mu \omega_{e}$. For any fixed value of $\mu$, $\omega$ and $k$, function $L(\mu,\omega,\ell,k)$ can be computed by first sorting elements in $\mathcal{S}_{2}(\omega)$ in non-increasing order of $p^{\prime}_{e}(\mu)$, then selecting the top $k-\ell$ elements with non-negative value of $p^{\prime}_{e}(\mu)$. This can be done within $\widetilde{O}(\log^{2} (K/\varepsilon)/\varepsilon^{2})$ time.

Note that $L(\mu,\omega,\ell,k)$ is convex with respect to $\mu$, as it is the point-wise supremum of a family of linear functions in $\mu$. In particular, as long as the order of the elements remain unchanged, $L(\mu,\omega,\ell,k)$ is a linear function with respect to $\mu$, with slope equal to $(\omega-\sum_{e\in \mathcal{S}_{2}(\omega)}w_{e}x_{e})$. Hence $L(\mu,\omega,k)$ is a piecewise linear function of $\mu$. As a consequence, the optimal multiplier $\mu^{*}$ must belong to set 
\begin{align*}
\mathcal{B}=\Big\{\mu\Big|\mbox{ there exist } e^{(1)},e^{(2)}\in \mathcal{S}_{2}(\omega) \mbox{ such that } p_{e^{(1)}}(\mu)=p_{e^{(2)}}(\mu)\Big\},
\end{align*}
which can be formally represented as
\begin{align}
\mathcal{B}=\Big\{\frac{p_{e^{(1)}}-p_{e^{(2)}}}{w_{e^{(1)}}-w_{e^{(2)}}}\Big|e^{(1)},e^{(2)}\in \mathcal{S}_{2}(\omega) \Big\}\subseteq \Big\{\frac{OPT}{\omega}\cdot b \Big|b\in \mathcal{B}^{\prime} \Big\},	
\end{align}
where 
\begin{align}
\mathcal{B}^{\prime}=\Big\{(1+\varepsilon)^{b}\cdot \frac{(1+\varepsilon)^{c}-1}{(1+\varepsilon)^{d}-1}\Big||b|,|c|,|d|\leq \log(K/\varepsilon)/\varepsilon, \mbox{ and }b,c,d\in \mathbb{Z}\Big\}.
\end{align}
This follows from the facts that $p_{e^{(i)}}=\frac{\varepsilon \mathrm{OPT}}{K}\cdot (1+\varepsilon)^{b_{i}}$ and $w_{e^{(i)}}=\frac{\varepsilon \omega}{K}\cdot (1+\varepsilon)^{c_{i}}\;(i=1,2)$ for some integers $b_{i},c_{i} \in [\log(K/\varepsilon)/\varepsilon]$. Therefore
\begin{align*}
|\mathcal{B}|\leq |\mathcal{B}^{\prime}|=O(\log^{3}(K/\varepsilon)/\varepsilon^{3})=\widetilde{O}(1/\varepsilon^{3}).
\end{align*}

Utilizing the convexity of $L(\mu,\omega,\ell,k)$, for each fixed value of $\omega$, $\ell$ and $k$, $\Upsilon_{4}(\omega, \ell,k)$ can be computed in $O(\log |\mathcal{B}| \cdot \log^{2} (K/\varepsilon)/\varepsilon^{2})=\widetilde{O}(1/\varepsilon^{2})$ time, by figuring out $\mu^{*}$ via binary search over set $\mathcal{B}^{\prime}$. It is worth pointing out that $\mathcal{B}^{\prime}$ must be computed and sorted in advance, which takes $O(|\mathcal{B}^{\prime}|\log |\mathcal{B}^{\prime}|)=\widetilde{O}(1/\varepsilon^{3})$ time.
	
\end{itemize}
To summarize, our second type of relaxation $\{\Upsilon_{2}(\omega, k)\}_{\omega\in \mathcal{W}, k\in [K]}$ can be obtained in 
\begin{align}
&\underbrace{\widetilde{O}(1/\varepsilon^{3})}_{\mbox{ $\mathcal{B}^{\prime}$}}+\underbrace{O(|\mathcal{K}|\cdot|\mathcal{W}|/\varepsilon^{2})}_{\mbox{$\Upsilon_{4}(\omega, \ell, k)$ } }+\underbrace{\widetilde{O}(n+1/\varepsilon^{2})}_{\mbox{$\Upsilon_{3}(\omega, \ell)$ } } \notag\\
=& \widetilde{O}(n)+O(z/\varepsilon^{3})
\end{align}
time, and requires $O(n+|\mathcal{K}|\cdot |\mathcal{W}|)=O(n+z/\varepsilon)$ space.

\section{Putting The Pieces Together--Combining Small and Large Items}\label{mainalgosec}
In our main algorithm, we utilize our two algorithms established in Section~\ref{largeitemsubsection} and~\ref{smallputting} as two basic building blocks, to approximately enumerate all the possible profit allocations among $\mathcal{L}$ and $\mathcal{S}$. The details are specified in Appendix \ref{appendixmainalgorithm} and performance guarantee is given by Theorem~\ref{mainalgoguaran}. We remark that set $X^{\prime}$ in the algorithm is not equal to $X$ but a subset of $X$, and is given by $X^{\prime}=\{i\cdot \varepsilon \mathrm{OPT}|i\in [1/\varepsilon]\}$. 
	
\begin{theorem}\label{mainalgoguaran} The total profits of items in set $S_{o}$ returned by Algorithm \ref{mainalgorithm}, is no less than $(1-\varepsilon)\cdot \mathrm{OPT}$. Algorithm \ref{mainalgorithm} requires $O(n+z^{2}/\varepsilon)$ space and a running time of $\widetilde{O}(n+z^{4}+(z^{2}/\varepsilon)\cdot\min\{n,\varepsilon^{-1}\})=\widetilde{O}(n+z^{2}/\varepsilon^{2})$.
\end{theorem}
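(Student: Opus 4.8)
The plan is to establish the two halves of the statement separately---the approximation ratio and the resource bounds---with the combining routine of Section~\ref{mainalgosec} acting as the glue between the large-item module (Section~\ref{largeitemsubsection}) and the small-item module (Section~\ref{smallputting}), so that essentially everything reduces to invoking Lemma~\ref{continuityvarphi}, Lemma~\ref{lemmaapproxi}, Lemma~\ref{largetimecomplexity} and Theorem~\ref{continuousrelaxationmainlemma} and tallying their guarantees.

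For correctness I would fix an optimal set $O^{*}$ and write it as $(O^{*}\cap\mathcal{L})\cup(O^{*}\cap\mathcal{S})$, with profits $P^{*}_{\mathcal{L}},P^{*}_{\mathcal{S}}$, weights $W^{*}_{\mathcal{L}},W^{*}_{\mathcal{S}}$ and cardinalities $k^{*}_{\mathcal{L}},k^{*}_{\mathcal{S}}$. By Proposition~\ref{upperboundsmall} we have $k^{*}_{\mathcal{L}}\le z$, so $K-k^{*}_{\mathcal{L}}$ ranges over a set of only $O(z)$ values, which is precisely the cardinality set $\mathcal{K}$ fed to Theorem~\ref{continuousrelaxationmainlemma}. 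I would then track the additive error of each stage: the partition and geometric rounding of Definitions~\ref{classpartition} and~\ref{profitsim} cost $O(\varepsilon\mathrm{OPT})$; with the uniform grid $X$ of size $\Theta(z/\varepsilon)$ we get $\delta_{X}=O(\varepsilon\mathrm{OPT}/z)$, hence $C\delta_{X}=(z+1)\delta_{X}=O(\varepsilon\mathrm{OPT})$ error on $\mathsf{Q}_{\mathcal{L}}$ via Lemma~\ref{continuityvarphi}; Lemma~\ref{lemmaapproxi} bounds the error of $\varphi^{\dag}_{\mathcal{S}}$ by $4\varepsilon\mathrm{OPT}$; and rounding the large-item profit target down onto the coarse grid $X'$ (spacing $\varepsilon\mathrm{OPT}$) inside the combining step costs one more $O(\varepsilon\mathrm{OPT})$. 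Taking $p\in X'$ just below $P^{*}_{\mathcal{L}}$ together with the split $(k^{*}_{\mathcal{L}},K-k^{*}_{\mathcal{L}})$ yields a candidate whose total weight---read off the $\phi^{X}$-table and $\varphi^{\dag}_{\mathcal{S}}$ up to these controlled roundings---is at most $W$ and whose profit is $\mathrm{OPT}-O(\varepsilon\mathrm{OPT})$; running the whole scheme with $\varepsilon/c$ for a suitable constant $c$ then delivers the stated $(1-\varepsilon)\mathrm{OPT}$ at a constant-factor cost. Exhibiting an actual set $S_{o}\subseteq E$ attaining this value (rather than merely its value) is routine by carrying witness pointers through the convolutions of Algorithm~\ref{highlevel} and through the relaxation subroutines.

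For the resource bounds I would add up the three modules. Space: $O(n)$ for the input and preprocessing, $O(n)$ for the convolution (Lemma~\ref{largetimecomplexity}), $O(z/\varepsilon)$ for the small-item data structures and $\mathsf{Q}_{\mathcal{S}}$ (Theorem~\ref{continuousrelaxationmainlemma}), and $O(|X|\cdot z)=O(z^{2}/\varepsilon)$ to keep $\mathsf{Q}_{\mathcal{L}}$ available for the combining pass, totalling $O(n+z^{2}/\varepsilon)$. Time: $O(n)+\widetilde{O}(\min\{z^{2}/\varepsilon^{2},nz^{2}/\varepsilon\})$ from Lemma~\ref{largetimecomplexity}, $\widetilde{O}(n+z^{4}+\min\{z^{2}/\varepsilon^{2},nz/\varepsilon\})$ from Theorem~\ref{continuousrelaxationmainlemma}, plus a lower-order term (at most $\widetilde{O}(z^{2}/\varepsilon)$) for the combining enumeration over $|X'|=O(1/\varepsilon)$ profit targets and $O(z)$ cardinality splits with poly-logarithmic table look-ups. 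Writing $\min\{z^{2}/\varepsilon^{2},nz^{2}/\varepsilon\}=(z^{2}/\varepsilon)\min\{1/\varepsilon,n\}$ and noting $\min\{z^{2}/\varepsilon^{2},nz/\varepsilon\}\le(z^{2}/\varepsilon)\min\{1/\varepsilon,n\}$, the sum is $\widetilde{O}(n+z^{4}+(z^{2}/\varepsilon)\min\{n,\varepsilon^{-1}\})$; and since $z=\min\{K,\varepsilon^{-1}\}\le\varepsilon^{-1}$ forces $z^{4}\le z^{2}/\varepsilon^{2}$ and $(z^{2}/\varepsilon)\min\{n,\varepsilon^{-1}\}\le z^{2}/\varepsilon^{2}$, this collapses to $\widetilde{O}(n+z^{2}/\varepsilon^{2})$.

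The hard part will be the correctness of the combining step, not the arithmetic. One must show that enumerating the large-item profit only on the coarse set $X'$ (size $1/\varepsilon$, rather than the finer $X$ of size $z/\varepsilon$) still allows the two precomputed tables to be stitched into a feasible solution losing only $O(\varepsilon\mathrm{OPT})$, while simultaneously guaranteeing that $\varphi^{\dag}_{\mathcal{S}}$ need only be evaluated at $O(1/\varepsilon)$ weight budgets and $O(z)$ cardinalities, so that Theorem~\ref{continuousrelaxationmainlemma} applies verbatim. This forces a careful choice of the weight-budget grid $\mathcal{W}$---remaining budgets must be rounded downward, which is harmless for feasibility but must be shown to cost only $O(\varepsilon\mathrm{OPT})$ in value, using the boundedness of small-item profits---and a careful coupling between the target profit $p$ and the weight $\phi^{X}_{\mathcal{L}}(p,k_{1})$ that the table actually reports, so that the additive errors of Lemmas~\ref{continuityvarphi} and~\ref{lemmaapproxi} and the several grid roundings remain additive rather than compounding multiplicatively.
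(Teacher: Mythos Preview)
Your proposal is correct and follows essentially the same approach as the paper: split $O^{*}$ along $\mathcal{L}$ and $\mathcal{S}$, exhibit the candidate pair $(k^{(*)},x^{(*)})=(|O^{*}\cap\mathcal{L}|,\text{round-down of }p(O^{*}\cap\mathcal{L})\text{ onto }X')$ in the combining enumeration, bound the total additive loss by stacking Lemma~\ref{continuityvarphi}, Lemma~\ref{lemmaapproxi}, and the coarse-grid rounding, then rescale $\varepsilon$; for the resource bounds, sum Lemma~\ref{largetimecomplexity} and Theorem~\ref{continuousrelaxationmainlemma} together with the $O(z^{2}/\varepsilon)$ storage for $\mathsf{Q}_{\mathcal{L}}$.

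One small remark: your closing discussion flags something the paper's own proof glosses over. Algorithm~\ref{mainalgorithm} feeds $\varphi^{\dag}_{\mathcal{S}}$ the budgets $W-\phi^{X}_{\mathcal{L}}(x,k)$ over all $(x,k)\in X'\times[z]$, which is $O(z/\varepsilon)$ distinct values, whereas Theorem~\ref{continuousrelaxationmainlemma} is stated for $|\mathcal{W}|=O(1/\varepsilon)$. Your proposed fix---round these residual budgets down onto an $O(1/\varepsilon)$ weight grid and absorb the loss via the $\le\varepsilon\mathrm{OPT}$ profit cap on small items---is the natural patch and is in fact slightly more careful than what the paper writes.
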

\begin{proof}
See Appendix \ref{appendixmaintheo}.
\end{proof}

Recall that our ultimate objective is to retrieve the solution set that has almost optimal objective function value. For large items, it can be obtained from the convolution algorithm by keeping track of the optimal allocation of the budget and cardinality bound. As for the collection of small items, let $\mathbf{x}^{*}=\{x^{*}_{e}\}_{e\in \mathcal{S}}$ be the optimal solution to the continuous problem (\ref{linrelaxation}) or (\ref{secondrelax}), we use the corresponding integer components $\{e|x^{*}_{e}=1\}$ as the approximate solution.

\section{Conclusion}\label{conclusion}
In this paper we proposed a new FPTAS for the \emph{$K$-item knapsack problem} (and \emph{Exactly $K$-item knapsack problem}) that exhibits $\widetilde{O}(K)$ and $O(z)$ improvements in time and space complexity respectively, compared with the state-of-the-art~\cite{mastrolilli2006hybrid}. More importantly, our result suggests that for a fixed value of $\varepsilon$, an $(1-\varepsilon)$-approximation solution of $K$KP can be computed in time asymptotically independent of cardinality bound $K$. Our scheme is also the first FPTAS that achieves better time and space complexity (up to logarithmic factors) than the standard dynamic programming scheme in~\cite{caprara2000approximation} over all parameter regimes.

\bibliography{refbib}
\bibliographystyle{plainnat}	

\appendix

\section{Supplementary Preliminaries}\label{appendixpre}
	
\subsection{Exact $K$-item Knapsack Problem}\label{appendixekkp}
The \emph{Exact $K$-item Knapsack Problem} (E-$K$KP) is another variant of the knapsack problem which has a deep connection with $K$KP, and can be formally formulated via replacing the cardinality upper bound constraint by an equality constraint $\sum_{i\in E}{x_{i}}=K$. It has been shown in~\citep{caprara2000approximation} that E-$K$KP and $K$KP can be converted into each other, \ie, any instance of one problem can be solved by using the algorithm of the other problem. We claim that our results presented in this paper work for E-$K$KP as well, which is straightforward to verify.
	
\subsection{Knowledge of the value of $\mathrm{OPT}$}\label{appendixknowopt} Notice that an $1/2$-approximate solution could be obtained in $O(n)$ time by properly rounding the real-valued solution of its linear programming relaxation to a feasible solution set~\citep{caprara2000approximation}. Hence, in this paper, for clarity of presentation, we assume that we know the value of $\mathrm{OPT}$. Indeed it can be verified that, if we replace $\mathrm{OPT}$ by $2\mathrm{OPT}^{\prime}$, where $\mathrm{OPT}^{\prime}$ denotes the objective value of the $1/2$-approximate solution, all of the analyses in this paper will still hold.  
	
\subsection{Upper bound on the number of non-empty classes}\label{nonemptyclass}
The correctness of bound (\ref{rupperbound}) is straightforward. Based on the definition of $r_{\mathcal{L}}, r_{\mathcal{S}}$ in Definition~\ref{classpartition}, it can be seen that $(1+\varepsilon)^{\max\{r_{\mathcal{L}},r_{\mathcal{S}}\}}\leq K/\varepsilon$ holds, combining this with the fact that there are at most $n$ non-empty classes, we conclude that (\ref{rupperbound}) is true. 
		
\section{Additional Related Work}
Here we give a brief overview of the relevant work with respect to the classic variant named \emph{Unbounded Knapsack Problem} (UKP)~\citep{ibarra1975fast}, in which the number of copies of each item could be any non-negative integer, instead of being restricted to a boolean variable as in the 0-1 KP. The earliest FPTAS for UKP was due to~\citep{ibarra1975fast}, which is an extension of their algorithm for 0-1 KP. The scheme achieves a time complexity of $O(n+1/\varepsilon^{4}\log(1/\varepsilon))$ and space complexity of $O(n+1/\varepsilon^{3})$. A more efficient  FPTAS was designed by~\citep{lawler1977fast}, which runs in $O(n+1/\varepsilon^{3})$ and requires $O(n+1/\varepsilon^{2})$ space. Recently an $\widetilde{O}(1/\varepsilon)$ improvement on both time and space complexity was made by~\citep{jansen2018faster}, in which a new FPTAS was presented with running time of $O(n+1/\varepsilon^{2}\log^{3}(1/\varepsilon))$ and $O(n+1/\varepsilon \log^{2}(1/\varepsilon))$ space bound.

\section{Proof of Proposition~\ref{upperboundsmall}}\label{appendixpro3.1}
\begin{proof}
The first result is due to the simple fact that in each class $\mathcal{S}^{\dag}_{i}$, we can retain the $K$ most profitable items, \ie, items with the smallest weights, and eliminate the other ones. Hence we have $|\mathcal{S}|\leq \min\{Kr,n\}$. On the other hand, notice that $|O^{*}\cap \mathcal{L}|\leq \mathrm{OPT}/\min_{e\in O^{*}\cap \mathcal{L}}{p_{e}}\leq\varepsilon^{-1}$, together with the fact that $|O^{*}\cap \mathcal{L}|\leq |O^{*}|\leq K$, we know that Proposition \ref{upperboundsmall} follows.
	\end{proof}

\section{Supplementary Materials of Section \ref{largeitemsubsection}}

\subsection{Proof of Lemma~\ref{continuityvarphi}}\label{appendixlemten}
	\begin{proof}
An important observation in the proof is, there are $O(z)$ number of effective convolutions in total, as the number of large items is always no more than $z$. This enables us to relate the convergence rate with the number of large items, instead of the number of classes in $\{\mathcal{L}_{i}\}_{i\in [r_{\mathcal{L}}]}$. In the following, we formalize our intuition and present a rigorous proof.
	 
% For any set $S\subseteq E$, we use $p(S)$ and $w({S})$ to denote the total profits and weights in $S$ respectively.

	Let $O^{*}_{\omega,k}$ denote the optimal solution to subproblem for large items and $\mathcal{L}_{\omega,k}^{*(i)}=\mathcal{L}^{(i)}\cap O_{\omega,k}^{*}$ be the optimal elements in $\mathcal{L}^{(i)}$. For notational convenience, we let $x^{*(i)}_{\omega, k}=\sup \{x\in X| x\leq p(\mathcal{L}^{*(i)}_{\omega, k}) \}$, and $x^{*}_{\omega,k}=\sup\{x\in X|x\leq \sum_{i=1}^{\ell}{x^{*(i)}_{\omega, k}}\}$. 
	
	Observe that for set $\mathcal{L}^{(i)}$, 
	\begin{align}
	\phi^{X}_{\mathcal{L}^{(i)}}(x^{*(i)}_{\omega, k},|\mathcal{L}^{*(i)}_{\omega, k}|)=\phi_{\mathcal{L}^{(i)}}(x^{*(i)}_{\omega, k},|\mathcal{L}^{*(i)}_{\omega, k}|)\leq \phi_{\mathcal{L}^{(i)}}(p(\mathcal{L}^{*(i)}_{\omega, k}),|\mathcal{L}^{*(i)}_{\omega, k}|)=w(\mathcal{L}^{*(i)}_{\omega, k}), \label{phiibound}
	\end{align}
	where the first equality holds since $\phi^{X}_{\mathcal{L}^{(i)}}$ is the restriction of $\phi_{\mathcal{L}^{(i)}}$ to $X$, and the inequality follows from the fact that $\phi_{(\cdot)}(\cdot,\cdot)$ is monotone non-decreasing with respect to profit $p$. Combining (\ref{phiibound}) with the subadditivity of
	inverse weight function,
	\begin{align}
	\label{boundphix}
	(\otimes_{i=1}^{\ell}{\phi^{X}_{\mathcal{L}^{(i)}}})(x^{*}_{\omega,k},k)\leq \sum_{i=1}^{\ell}{\phi^{X}_{\mathcal{L}^{(i)}}(x^{*(i)}_{\omega, k},|\mathcal{L}^{*(i)}_{\omega, k}|)} \leq \sum_{i=1}^{\ell}{w(\mathcal{L}^{*(i)}_{\omega, k})}\leq\omega,
	\end{align}
	which further implies that $\varphi^{X}_{\mathcal{L}}(\omega,k)\geq x^{*}_{\omega,k}\geq  \sum_{i=1}^{\ell}{x^{*(i)}_{\omega, k}}-\delta_{X}$.
	%\begin{align*}
	%\varphi^{X}_{\mathcal{L}}(\omega,k)\geq x^{*}_{\omega,k}\geq  \sum_{i=1}^{\ell}{x^{*(i)}_{\omega, k}}-\delta_{X}.
	%\end{align*}
	Hence, we are able to lower bound the error incurred by discretization as,
	\begin{align}
	\label{errordiscretization}
	\varphi^{X}_{\mathcal{L}}(\omega,k)-\varphi_{\mathcal{L}}(\omega,k)\geq &\sum_{i=1}^{\ell}{x^{*(i)}_{\omega, k}}-\sum_{i=1}^{\ell}{p(\mathcal{L}_{\omega,k}^{*(i)})}-\delta_{X} \geq -\delta_{X}\Big(1+\sum_{i=1}^{\ell}{\mathbbm{1}_{\Delta_{i}\neq 0}}\Big),
	\end{align}
	where $\Delta_{i}=\mathcal{L}_{\omega,k}^{*(i)}-x^{*(i)}_{\omega,k}$, and the second inequality holds because $\Delta_{i}\leq \delta_{X}$. To bound the RHS of (\ref{errordiscretization}), we note that $\Delta_{i}\neq 0$ only if $\mathcal{L}^{*(i)}_{\omega,k}$ is non-empty. Hence
	\begin{align}
	\label{indibound}
	\sum_{i=1}^{\ell}{\mathbbm{1}_{\Delta_{i}\neq 0}}\leq \sum_{i=1}^{\ell}{\mathbbm{1}_{\mathcal{L}^{*(i)}_{\omega,k}\neq \emptyset}}\leq |\mathcal{L}\cap O^{*}_{\omega,k}|\leq z,
	\end{align}
	and the error brought by discretization is lower bounded by $-(z+1)\delta_{X}$. On the other hand, it is clear that $\phi^{X}_{\mathcal{L}}(\omega, k)\leq \phi_{\mathcal{L}}(\omega, k)$, which follows by applying induction on $|\mathcal{L}|$. Therefore the absolute value of the error is no more than $(z+1)\delta_{X}$. Combining this with the fact that $\delta_{X}\geq \frac{\mathrm{OPT}}{|X|}$, the proof is complete.
	\end{proof}

\subsection{Proof of Lemma~\ref{monoineq}}\label{appendixlem12}
\begin{proof} Let $\Delta=[\chi_{H}(\zeta_{2})-\chi_{H}(\zeta_{1})] - [\zeta_{2}-\zeta_{1}]$ denote the difference between the numerator and denominator in inequality~\ref{gradientineq}. We assume that $\Delta>0$ and finish the proof by contradiction.

Without loss of generality we can assume that $\zeta_{2}\geq \zeta_{1}$. We first consider points $(p_{0}+\lambda_{a}\zeta_{1},k_{0}+\zeta_{1},\chi_{H}(\zeta_{1}))$ and $(p_{0}+\lambda_{a}\zeta_{1},k_{0}+\zeta_{1},\chi_{H}(\zeta_{1})+\Delta)$ in column indexed by $\zeta_{1}$, the following inequality follows from the fact that $\chi_{H}(\zeta_{1})$ is the index of column minimum,
\begin{align}
&\phi^{X}_{\mathcal{L}^{\dag}_{a}}(\lambda_{a}\chi_{H}(\zeta_{1}), \chi_{H}(\zeta_{1}))+\phi^{X}_{S}(p_{0}+\lambda_{a}[\zeta_{1}-\chi_{H}(\zeta_{1})], k_{0}+[\zeta_{1}-\chi_{H}(\zeta_{1})])\label{ineq11}\\
\leq &\phi^{X}_{S}(p_{0}+\lambda_{a}[\zeta_{1}-(\chi_{H}(\zeta_{1})+\Delta)], k_{0}+[\zeta_{1}-(\chi_{H}(\zeta_{1})+\Delta)])\notag\notag\\
&+\phi^{X}_{\mathcal{L}^{\dag}_{a}}(\lambda_{a}[\chi_{H}(\zeta_{1})+\Delta], \chi_{H}(\zeta_{1})+\Delta)
\label{ineq12}.
\end{align}
Similar results can be obtained for points $(p_{0}+\lambda_{a}\zeta_{2},k_{0}+\zeta_{2},\chi_{H}(\zeta_{2}))$ and $(p_{0}+\lambda_{a}\zeta_{2},k_{0}+\zeta_{2},\chi_{H}(\zeta_{2})-\Delta)$ in the cube,
\begin{align}
&\phi^{X}_{\mathcal{L}^{\dag}_{a}}(\lambda_{a}\chi_{H}(\zeta_{2}), \chi_{H}(\zeta_{2}))+\phi^{X}_{S}(p_{0}+\lambda_{a}[\zeta_{2}-\chi_{H}(\zeta_{2})], k_{0}+[\zeta_{2}-\chi_{H}(\zeta_{1})])\label{ineq21}\\
\leq &\phi^{X}_{S}(p_{0}+\lambda_{a}[\zeta_{2}-(\chi_{H}(\zeta_{2})-\Delta)], k_{0}+[\zeta_{2}-(\chi_{H}(\zeta_{2})-\Delta)])\notag\\
&+\phi^{X}_{\mathcal{L}^{\dag}_{a}}(\lambda_{a}[\chi_{H}(\zeta_{2})-\Delta], \chi_{H}(\zeta_{2})-\Delta).\label{ineq22}
\end{align}
We remark that $\chi_{H}(\zeta_{1})+\Delta$ is a valid index as $\chi_{H}(\zeta_{1})<\chi_{H}(\zeta_{1})+\Delta=\chi_{H}(\zeta_{2})-[\zeta_{2}-\zeta_{1}]\leq \chi_{H}(\zeta_{2})$. Similar arguments can be applied to show the validity of index $\chi_{H}(\zeta_{2})-\Delta$. According to the definition of $\Delta$, 
\begin{align}
\label{ineq3}
p_{0}+\lambda_{a}[\zeta_{2}-(\chi_{H}(\zeta_{2})-\Delta)]=p_{0}+\lambda_{a}[\zeta_{1}-\chi_{H}(\zeta_{1})],
\end{align}
which suggests that the term in (\ref{ineq22}) is identical to that in (\ref{ineq12}). Via similar reasoning, we have 
\begin{align}
\label{ineq4}
p_{0}+\lambda_{a}[\zeta_{1}-(\chi_{H}(\zeta_{1})+\Delta)]=p_{0}+\lambda_{a}[\zeta_{2}-\chi_{H}(\zeta_{2})].
\end{align}
Substituting (\ref{ineq3})-(\ref{ineq4}) into (\ref{ineq11})-(\ref{ineq22}), it holds that 
\begin{align}
\label{conineq}
&\phi^{X}_{\mathcal{L}^{\dag}_{a}}(\lambda_{a}\chi_{H}(\zeta_{1}), \chi_{H}(\zeta_{1}))-	\phi^{X}_{\mathcal{L}^{\dag}_{a}}(\lambda_{a}[\chi_{H}(\zeta_{1})+\Delta], \chi_{H}(\zeta_{1})+\Delta)\notag\\
\leq &\phi^{X}_{\mathcal{L}^{\dag}_{a}}(\lambda_{a}[\chi_{H}(\zeta_{2})-\Delta], \chi_{H}(\zeta_{2})-\Delta)-\phi^{X}_{\mathcal{L}^{\dag}_{a}}(\lambda_{a}\chi_{H}(\zeta_{1}), \chi_{H}(\zeta_{1})).
	\end{align}
	Recall that $\phi^{X}_{\mathcal{L}^{\dag}_{a}}(\lambda_{a}t, t)=\min\{{\sum_{e\in S}}{w_{e}}|S\subseteq \mathcal{L}^{\dag}_{a}, \sum_{e\in S}{p_{e}}\geq \lambda_{a}t\}$, in which the cardinality upper bound is redundant, as the profit of each single item is no less than $\varepsilon \mathrm{OPT}$. Without loss of generality we can assume that $w_{i} \neq w_{j}$, otherwise we can slightly change $w_{i}$ and the budget to achieve this goal. Therefore $\phi^{X}_{\mathcal{L}^{\dag}_{a}}(\lambda_{a}t, t)$ is a strictly convex function and (\ref{conineq}) does not hold. The proof is complete.
	\end{proof} 	

\subsection{Details of Algorithm \ref{sliceindexalgo}}\label{appendixsliceindex}

\begin{algorithm}[H]
\small
    \caption{SliceIndex$(H)$}
    \label{sliceindexalgo}
\textbf{Input:} $H$;\\
\textbf{Output:} $\chi_{H}(\cdot)$\\
$H^{\prime}\leftarrow$ Even fibers in $H$;\\
Compute the optimum indices of fibers in $H^{\prime}$ via SliceIndex$(H^{\prime})$;\\
\For{each odd fiber $2i$ in $H$}
{Enumerate $[\chi_{H}(2i+1)-1, \chi_{H}(2i-1)+1]$ to find the minimum index in the $2i$-th fiber of $H$.}
\end{algorithm}

\subsection{Proof of Proposition~\ref{sliceddc}}\label{appendixsliceddc}
\begin{proof}
We use $c_{H}$ to denote the number of columns in slice $H$ and let $\mathcal{T}_{H}(c_{H})$ be the running time of computing the index function for $H$. Then 
\begin{itemize}
\item Line $4$ requires $\mathcal{T}_{H}(c_{H}/2)$	 time. Without loss of generality we can assume that $c_{H}$ is even, otherwise it can be verified that the corresponding total time complexity is within the same order.
\item Each iteration in line $6$ takes 
\begin{align}\label{oddcompl}
&O([\chi_{H}(2i-1)+1]-[\chi_{H}(2i+1)-1]+1)\notag\\
=&O(\chi_{H}(2i-1)-\chi_{H}(2i+1)+3)
\end{align}
time. We remark that the RHS of (\ref{oddcompl}) is non-negative according to Lemma \ref{monoineq}. Taken together, the running time of computing column minimum in odd columns can be upper bounded as,
\begin{align}\label{odd}
&\sum_{i=1}^{O(c_{H}/2)}{O(\chi_{H}(2i-1)-\chi_{H}(2i+1)+3)}=O(c_{H}+z)=O(z), 
\end{align}
which holds because both $c_{H}$ and $\chi_{H}(\zeta)$ are no more than $z$. 
\end{itemize}
To summarize, the total running time satisfies the recurrence relation $\mathcal{T}_{H}(c_{H})=\mathcal{T}_{H}(c_{H}/2)+O(z)$. Solving this equation we have $\mathcal{T}_{H}(c_{H})=O(z\log z)$, the proof is complete.
\end{proof}

\subsection{Details of Algorithm \ref{convolution}}\label{appendixconvolution}
\begin{algorithm}[H]
\small
    \caption{Convolution Algorithm $\otimes$}
    \label{convolution}
\textbf{Input:} $\phi^{X}_{\mathcal{L}^{\dag}_{a}}(\cdot,\cdot)$, $\phi^{X}_{S}(\cdot,\cdot)$;\\
\textbf{Output:} $(\phi^{X}_{\mathcal{L}^{\dag}_{a}}\otimes \phi^{X}_{S})(\cdot,\cdot)$\\
\For{each slice $H$ in the form of (\ref{slicedef}) }
{Compute $\chi_{H}(\cdot)$ using SliceIndex($H$);\\}
\For {$p \in X$, $k\in [K]$}{
$(\phi^{X}_{\mathcal{L}^{\dag}_{a}}\otimes \phi^{X}_{S})(p,k)=\phi^{X}_{\mathcal{L}^{\dag}_{a}}(\max\{x\in X: x\leq \psi(p,k)\cdot p^{\dag}_{a}\}, \psi(p,k))+\phi^{X}_{S}(\max\{x\in X: x\leq p-\psi(p,k)\cdot p^{\dag}_{a}\}, k-\psi(p,k))\}\;(p\in X)$\\
%$\;\;\;\;\;\;\;\;\;\;\;\;\;\;\;\;\;\;\;\;\;\;\;\;\;\;\;\;
}
\textbf{Return $(\phi^{X}_{\mathcal{L}^{\dag}_{a}}\otimes \phi^{X}_{S})(\cdot,\cdot)$}.
\end{algorithm}

\subsection{Proof of Lemma~\ref{largetimecomplexity}}~\label{appendixle}
Based on Proposition \ref{sliceddc}, it can be seen that a single convolution operation takes $O(|X|\cdot z\log z)=\widetilde{O}(z^{2}/\varepsilon)$ time, since there are $O(|X|)$ slices in the searching space. Additionally, in Algorithm \ref{highlevel} we need to
\begin{itemize}
\item Compute the base functions $\phi_{\mathcal{L}^{\dag}_{i}}(\cdot,\cdot)\; (i\in [r_{\mathcal{L}}])$, which requires $O(n)$ time. This is achieved by storing sequence
\begin{align*}
\min\Big\{\sum_{e\in T}{}w_{e}\Big|T\subseteq \mathcal{L}^{\dag}_{i}, |T|=j\Big\}_{i\in [r], j\in [|\mathcal{L}^{\dag}_{i}|]}    
\end{align*}
and then utilizing binary search on the sequence, which requires $O(\sum_{i=1}^{r}{|\mathcal{L}^{\dag}_{i}|})=O(n)$ space and time in advance.
\item Perform $\otimes$ operation for $r=O(\min\{\log(1/\varepsilon)/\varepsilon,n\})$ times, the time complexity of which is $O((z^{2}r\log z)/\varepsilon)=O((z^{2}\log z/\varepsilon)\cdot \min \{ \log(1/\varepsilon)/\varepsilon, n\})$.
\end{itemize}
The proof is complete. 
	
	%\section{Supplementary Materials of Section~\ref{smallputting}}\label{appendixsmall}

\section{Supplementary Materials of Section \ref{smallputting} }	
\subsection{Proof of Lemma~\ref{continuousrelaxationmainlemma}}\label{appendixcontinuousrelaxationmainlemma}
\begin{proof}
The complexity results in Lemma~\ref{continuousrelaxationmainlemma} can be achieved by letting $\varphi^{\dag}_{\mathcal{S}}=\Upsilon_{1}$ when $K\leq \varepsilon^{-1}$ and $\varphi^{\dag}_{\mathcal{S}}=\Upsilon_{2}$ otherwise. Therefore, the total running time is bounded by 
\begin{align}
&O\Big(\frac{z}{\varepsilon}\cdot\min\Big\{\frac{K}{\varepsilon}, n\Big\}\Big)\cdot\mathbbm{1}_{\{K\leq \varepsilon^{-1}\}}+	\Big[\widetilde{O}\Big(\min\Big\{\frac{K}{\varepsilon^{2}},\frac{n}{\varepsilon}\Big\}\Big)+O\Big(\frac{z}{\varepsilon^{3}}\Big)\Big]\cdot \mathbbm{1}_{\{K> \varepsilon^{-1}\}}\notag\\
=&\widetilde{O}\Big( \min\Big\{\frac{z^{2}}{\varepsilon^{2}}, \frac{nz}{\varepsilon}\Big\}+z^{4}+\min\Big\{Kz^{2},nz\Big\}\Big).
\end{align}
The space required is in the order of $O(|\mathcal{K}||\mathcal{W}|)=O(z/\varepsilon)$.
\end{proof}

\subsection{Proof of Lemma~\ref{lemmaapproxi}}\label{appendixlemmaapproxi}		
\begin{proof}
It suffices to show the following bounds on the differences between functions $\varphi_{\mathcal{S}}$ and $\Upsilon_{1}$, $\Upsilon_{2}$:
\begin{align}
&|\Upsilon_{1}(\omega,k)	-\varphi_{\mathcal{S}}(\omega,k)|\leq 2\varepsilon \mathrm{OPT}, \label{approximatebound0}\\
&|\Upsilon_{2}(\omega,k)-\varphi_{\mathcal{S}}(\omega,k)|\leq 4\varepsilon \mathrm{OPT}\label{approximatebound}.
\end{align}
Assuming inequalities above, we can complete the proof. Now we proceed to prove the bounds (\ref{approximatebound0}) and (\ref{approximatebound}). 

\paragraph{(\uppercase\expandafter{\romannumeral 1}) Proof of bound (\ref{approximatebound0})} We first make the observation that $\varphi_{\mathcal{S}}(\omega,k)\leq\Upsilon_{1}(\omega,k)$, since the feasible region in $\varphi_{\mathcal{S}}$ is a subset of that in $\Upsilon_{1}$. As it has been shown in~\citep{caprara2000approximation}, there are at most two fractional components in $\mathbf{x}^{*}$, the optimal solution to the LP relaxation (\ref{linrelaxation}). Hence, the objective value will suffer a loss of at most $2\varepsilon \mathrm{OPT}$, if we set all the fractional entries in $\mathbf{x}^{*}$ to be $0$, \ie, 
\begin{align*}
\sum_{i=1}^{n}{p_{i}\bar{x}_{i}}\geq \Upsilon_{1}(\omega,k)-2\varepsilon \mathrm{OPT}
\end{align*}
holds for the integer vector $\bar{\mathbf{x}}^{*}$. On the other hand, notice that $\bar{\mathbf{x}}^{*}$ is also a feasible solution to the subproblem $\varphi_{\mathcal{S}}(\omega, k)$, it follows that $\sum_{i=1}^{n}{p_{i}\bar{x}_{i}}\leq \varphi_{\mathcal{S}}(\omega,k)$. Relating $\varphi_{\mathcal{S}}(\omega,k)$ and $\Upsilon_{1}(\omega,k)$ to the total profits of $\bar{\mathbf{x}}$, (\ref{approximatebound0}) follows. 
	
\paragraph{(\uppercase\expandafter{\romannumeral 2}) Proof of bound (\ref{approximatebound})}	To show the correctness of (\ref{approximatebound}), observe that each one of the following two operations appearing in the definition of $\Upsilon_{4}(\omega,\ell,k)$, will incur a multiplicative loss of at most $(1-\varepsilon)$, compared with the LP relaxation on set $\mathcal{S}_{2}(\omega)$, denoted by $\Upsilon_{1}(\omega,k,\mathcal{S}_{2}(\omega))$:
\begin{itemize}
\item Increasing the weight $w_{e}\;(e\in\mathcal{S}_{2}(\omega))$ to $\bar{w}_{e} \in [w_{e}, (1+\varepsilon) w_{e}]$;
\item Scaling the budget $\omega$ by a factor of $(1-\varepsilon)$. 
\end{itemize}
Therefore $\Upsilon_{4}$ can be lower bounded using  $\Upsilon_{1}(\omega,k,\mathcal{S}_{2}(\omega))$:
\begin{align}\label{varphi12differ}
\Upsilon_{4}(\omega,k)\geq (1-\varepsilon)^2\cdot\Upsilon_{1}(\omega,k,\mathcal{S}_{2}(\omega))\overset{}{\geq} \varphi_{\mathcal{S}_{\omega}}(\omega,k)-4\varepsilon \mathrm{OPT}.
\end{align}
The last inequality $(a)$ follows from the fact that $(1-\varepsilon)^{2}\geq 1-2\varepsilon$, together with inequality $\Upsilon_{1}(\omega,k,\mathcal{S}_{2}(\omega))\geq \varphi_{\mathcal{S}_{\omega}}(\omega,t)-2\varepsilon \mathrm{OPT}$, whose proof goes along the same lines as the proof of (\ref{approximatebound0}). 

Let $\mathcal{S}^{*}(\omega,k)$ be the optimal solution set to $\varphi_{\mathcal{S}}(\omega,k)$. Observe that the profit function $\varphi_{\mathcal{S}}$ can be expressed as 
\begin{align}\label{ineqappendix25}
\varphi_{\mathcal{S}}(\omega,k)=\varphi_{\mathcal{S}}(\omega-w(\mathcal{S}^{*}_{1}(\omega,k)),k-|\mathcal{S}^{*}_{1}(\omega,k)|)+p(\mathcal{S}^{*}_{1}(\omega,k))	,
\end{align}
where $\mathcal{S}^{*}_{1}(\omega,k)=\mathcal{S}^{*}(\omega,k)\cap\{e\in \mathcal{S}|w_{e}\leq \varepsilon \omega/K\}$ represents the set of elements in $\mathcal{S}^{*}(\omega,k)$ with cost no more than $\varepsilon \omega/ K$. As a consequence, the difference between $\Upsilon_{2}$ and $\varphi_{\mathcal{S}}$ can be lower bounded as,
\begin{align*}
&\Upsilon_{2}(\omega,k)-\varphi_{\mathcal{S}}(\omega,k)\notag\\
\overset{}{\geq} & [\Upsilon_{4}(\omega-\mathcal{S}^{*}_{1}(\omega,k), k-|\mathcal{S}^{*}_{1}(\omega,k)|)-\varphi_{\mathcal{S}}(\omega-w(\mathcal{S}^{*}_{1}(\omega,k)),k-|\mathcal{S}^{*}_{1}(\omega,k)|)]\\
&+[\Upsilon_{3}(|\mathcal{S}^{*}_{1}(\omega,k)|)-p(\mathcal{S}^{*}_{1}(\omega,k))]\notag\\
\overset{}{\geq} & -4\varepsilon \mathrm{OPT},
\end{align*}
where the last inequality follows from (\ref{varphi12differ}) and the fact that $\Upsilon_{3}(|\mathcal{S}^{*}_{1}(\omega,k)|)\geq p(\mathcal{S}^{*}_{1}(\omega,k))$.

Finally we conclude that $\Upsilon_{4}(\omega,k)\leq \varphi_{\mathcal{S}}(\omega,k)+ 2\varepsilon \mathrm{OPT}$. Let $\ell^{*}_{\omega, k}$ be the optimal index in $\Upsilon_{4}(\omega,k)$, we consider set $\tilde{\mathcal{S}}$ consists of the following two types of items: 
\begin{itemize}
\item Top $\ell^{*}_{\omega,k}$ elements in $\mathcal{S}_{1}(\omega)$;
\item Elements corresponding to the integer entries in the optimal solution to $\Upsilon_{4}(\omega,k-\ell^{*}_{\omega,k})$.
\end{itemize}
Observe that the indicator vector of $\tilde{\mathcal{S}}$ is a feasible solution to $\varphi_{\mathcal{S}}(\omega,k)$, hence $p(\tilde{\mathcal{S}})\leq \varphi_{\mathcal{S}}(\omega,k)$. Combining with the fact that $p(\tilde{\mathcal{S}})\geq \Upsilon_{4}(\omega,k)-2\varepsilon \mathrm{OPT}$, the proof is complete. 
\end{proof}

\section{Supplementary Materials of Section \ref{mainalgosec} }	

\subsection{Details of Algorithm \ref{mainalgorithm}}\label{appendixmainalgorithm}

\begin{algorithm}[H]\label{subroutine2}
\small
    \caption{Main Algorithm}
    \label{mainalgorithm}
\textbf{Input:} Functions $\phi^{X}_{\mathcal{L}}(\cdot,\cdot)$, $\tilde{\varphi}_{\mathcal{S}}(\cdot,\cdot)$;\\
\textbf{Output:} Near optimal solution $T_{o}$;\\
$(k^{*},x^{*})\leftarrow \argmax_{k\in [z], x\in X^{\prime}}\Big\{x+\varphi^{\dag}_{\mathcal{S}}(W-\phi^{X}_{\mathcal{L}}(x,k),K-k)\Big\}$;\\
$T^{\mathcal{S}}_{o} \leftarrow $ The solution set in $\mathcal{S}$ corresponding to $\varphi^{\dag}_{\mathcal{S}}(W-\phi_{\mathcal{L}}(x^{*},k^{*}),K-k^{*})$;\\
$T^{\mathcal{L}}_{o} \leftarrow $The solution set in $\mathcal{L}$ corresponding to $\phi_{\mathcal{L}}(x^{*}, k^{*})$; \\
\textbf{Return} $S_{o} \leftarrow T^{\mathcal{S}}_{o} \cup T^{\mathcal{L}}_{o} $. 
\end{algorithm}

\subsection{Proof of Theorem~\ref{mainalgoguaran}}\label{appendixmaintheo}
\begin{proof}
Without loss of generality we can assume that 
\begin{align}
\phi^{X}_{\mathcal{L}}(\mathrm{OPT}-\delta_{X^{\prime}},|O^{*}\cap \mathcal{L}|)> w(O^{*}\cap \mathcal{L}).
\end{align}
Otherwise, the optimal solution in $\mathcal{S}$ already achieves a near optimal approximation. In the following, we let $x^{(*)}$ be the best approximation of $\sum_{e\in O^{*}\cap \mathcal{L}}{p_{e}}$ in $X^{\prime}$, \ie, $x^{(*)}\in X^{\prime}$ and  
\begin{align}\label{combinassump}
\phi^{X}_{\mathcal{L}}(x^{(*)},k^{(*)})\leq w(O^{*}\cap \mathcal{L})\leq \phi^{X}_{\mathcal{L}}(x^{(*)}+\delta_{X^{\prime}},k^{(*)}),
\end{align}
where $k^{(*)}=|O^{*}\cap \mathcal{L}|$. Notice that $\phi^{X}_{\mathcal{L}}$ is non-decreasing with respect to profit, we can conclude that such an $x^{(*)}$ exists. In addition, $x^{(*)}+\delta_{X^{\prime}}\in X^{\prime}$. On the other hand, we have
\begin{align}
x^{(*)}+\delta_{X^{\prime}}&\overset{(a)}{\geq} \varphi^{X}_{\mathcal{L}}(\phi^{X}_{\mathcal{L}}(x^{(*)}+\delta_{X},k^{(*)}), k^{(*)})\\ &\overset{(b)}{\geq} \varphi^{X}_{\mathcal{L}}(w(O^{*}\cap \mathcal{L}), k^{(*)})\notag\\
&\overset{(c)}{\geq} \varphi_{\mathcal{L}}(w(O^{*}\cap \mathcal{L}), k^{(*)})-\varepsilon \mathrm{OPT}, \label{lowerboundxstar}
\end{align}
where $(a)$ follows from the definition of $\phi_{\mathcal{L}}$ and $\varphi_{\mathcal{L}}$; $(b)$ is based on the monotonicity of $\varphi^{X}_{\mathcal{L}}(\cdot , |\mathcal{L}\cap O^{*}|)$ and RHS of (\ref{combinassump}); In $(c)$ we utilize the point-wise convergence property of  $\varphi^{X}$ claimed in Lemma \ref{continuityvarphi}.
	
To summarize, the total profits of $S_{o}$ can be lower bounded as,
\begin{align}
p(S_{o})=&p(S^{\mathcal{L}}_{o})+p(S^{\mathcal{S}}_{o})\notag\\
\overset{(a)}{\geq}&[\varphi^{\dag}_{\mathcal{S}}(W-\phi^{X}_{\mathcal{L}}(x^{(*)},k^{(*)}), K-k^{(*)})-4\varepsilon\cdot\mathrm{OPT}]+ x^{(*)}\notag\\
\overset{(b)}{\geq} & [\varphi^{\dag}_{\mathcal{S}}(w(\mathcal{S}\cap O^{*}),  K-k^{(*)})-4\varepsilon\cdot  \mathrm{OPT}]+[\varphi_{\mathcal{L}}(w(O^{*}\cap \mathcal{L}),k^{(*)})-\delta_{X^{\prime}}-\varepsilon\cdot  \mathrm{OPT}]\notag\\
\geq & [\varphi_{\mathcal{S}}(w(O^{*}\cap \mathcal{L}), K-k^{(*)})+\varphi_{\mathcal{L}}(w(O^{*}\cap \mathcal{L}),k^{(*)})]-6\varepsilon \cdot \mathrm{OPT} \notag\\
=&(1-6\varepsilon)\cdot \mathrm{OPT}, \notag
\end{align}
where $(a)$ comes from Lemma \ref{lemmaapproxi} and the fact that $(k^{(*)},x^{(*)})$ is a candidate pair in the $3$-th line of Algorithm \ref{mainalgorithm}. In $(b)$, the first term follows from inequality (\ref{lowerboundxstar}), the second term is due to LHS of (\ref{combinassump}) and the monotonicity of $\varphi^{\dag}_{\mathcal{S}}$. 
		
\paragraph{Complexity Results.}The time complexity result directly follows from Lemmas \ref{largetimecomplexity} and \ref{continuousrelaxationmainlemma}:
\begin{align}
&\underbrace{O(n)+\widetilde{O}\Big(\min \Big\{\frac{z^{2}}{\varepsilon^{2}}, \frac{nz^{2}}{\varepsilon} \Big\}\Big)}_{\mbox{Figure out $\{\phi^{X}_{\mathcal{L}}(x,k)\}_{k\in [z], x\in X}$}} + \underbrace{\widetilde{O}\Big(n+\min\Big\{\frac{z^{2}}{\varepsilon^{2}}, \frac{nz}{\varepsilon}\Big\}+z^{4}\Big)}_{\mbox{Compute $\varphi^{\dag}_{\mathcal{S}}$}}\\
=&\widetilde{O}\Big(n+z^{4}+\frac{z^{2}}{\varepsilon}\cdot\min\Big\{n,\varepsilon^{-1}\Big\}\Big), \label{timecomplexityspec}
\end{align} 
which is within the order of $\widetilde{O}(n+z^{2}/\varepsilon^{2})$. For the space requirement,  we need to store the information about $\phi^{X}_{\cup_{j=1}^{i-1}{\mathcal{L}^{\dag}_{j}}}$ to implement the new convolution operation in the current stage, this requires $O(|X|\cdot z)=O(z^{2}/\varepsilon)$ space. Combining with Lemmas \ref{largetimecomplexity} and \ref{continuousrelaxationmainlemma}, it can be seen that $O(n+z^{2}/\varepsilon)$ space is sufficient.
\end{proof}	

\section{Application in resource constrained scheduling}\label{Secapp}
In this section, we briefly revisit the classic resource constrained scheduling problem in~\citep{jansen2006preemptiveresource}, which asks to design a preemptive scheduling algorithm that minimizes the maximum completion time, while satisfying the resource constraint. More specifically, for a given set of tasks $\mathcal{T}=\{T_{1}, T_{2},\ldots, T_{n}\}$ and $m$ identical machines, $p_{j} \;(j\in \mathcal{T})$ units of time and $r_{j}(j\in \mathcal{T})$ units of resources are required for processing task $j$, while there are only $c$ units of resources available at each time slot. The problem is to design a scheduling algorithm to minimize $C_{max}$, the maximum completion time. As in the literature, the problem is denoted by $P|res1,\ldots,pmtn|C_{\max}$. 
	 
We remark that it is possible to obtain a faster FPTAS for this problem by following the approach in~\citep{jansen2006preemptiveresource}, which is mainly based on the linear programming formulation~\cite[Eq (1.1)]{jansen2006preemptiveresource}. For the case when there is only one resource constraint, the subproblem that need to be solved turns out to be the $K$-item knapsack problem studied in this paper. According to~\citep{jansen2006preemptiveresource},
the following proposition holds. 
\begin{proposition}[\citep{jansen2006preemptiveresource}]\label{schedulingcom}
An FPTAS for $K$-item knapsack problem with time complexity $\mathcal{T}(n,m,1/\varepsilon)$ implies a FPTAS for problem $P|res1,\ldots,pmtn|C_{\max}$ with time complexity	$O((\mathcal{T}(n,m,1/\varepsilon)+n\log\log(n/\varepsilon))\cdot n\log(1/\varepsilon)(1/\varepsilon^{2}+\log n))$.
\end{proposition}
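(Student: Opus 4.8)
The plan is to \emph{not} re-derive the scheduling FPTAS from scratch, but to recall the price-directive LP-rounding scheme of \citet{jansen2006preemptiveresource} and track precisely how the $K$-item knapsack subroutine enters the running time. First I would recall the configuration LP underlying \cite[Eq.~(1.1)]{jansen2006preemptiveresource}: after guessing a target makespan $T$, the relaxation has one variable $x_f$ for every \emph{snapshot} $f$, i.e.\ a set of tasks that may be processed simultaneously ($|f|\le m$ and $\sum_{j\in f} r_j\le c$); the $n$ coupling constraints require each task $j$ to accumulate $p_j$ units of processing, and the feasibility constraint is $\sum_f x_f\le T$. This LP has exponentially many columns but only $n$ coupling constraints, which is exactly the regime the logarithmic-potential / Plotkin--Shmoys--Tardos decomposition is built for. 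On the outer level, an $O(1)$-approximation of $C_{\max}$ is computable in near-linear time, so pinning $C_{\max}$ down to a $(1+\varepsilon)$ factor requires testing only $O(\log(1/\varepsilon))$ candidate values of $T$; this is the leading $\log(1/\varepsilon)$ factor in the claimed bound.

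Second — the heart of the reduction — for each fixed $T$ I would run the price-directive decomposition to solve the configuration LP approximately. Each iteration maintains dual prices $y_j\ge 0$ on the tasks and must solve the \emph{block problem} $\max\{\sum_{j\in f} y_j : |f|\le m,\ \sum_{j\in f} r_j\le c\}$, which is precisely a $K$-item knapsack instance with ground set the task set, profits $y_j$, weights $r_j$, budget $c$, and cardinality bound $K=m$. A $(1-\Theta(\varepsilon))$-approximate block solution suffices for the framework to output a $(1+\varepsilon)$-approximate primal solution, so each iteration invokes the assumed $K$KP FPTAS once at cost $\mathcal{T}(n,m,1/\varepsilon)$, plus $O(n\log\log(n/\varepsilon))$ overhead for updating the prices and the fractional snapshot solution (the weighted-median / approximate line-search step). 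The number of iterations for an LP with $n$ coupling constraints is $O(n(1/\varepsilon^{2}+\log n))$, so one LP solve costs $O\bigl((\mathcal{T}(n,m,1/\varepsilon)+n\log\log(n/\varepsilon))\cdot n(1/\varepsilon^{2}+\log n)\bigr)$; multiplying by the $O(\log(1/\varepsilon))$ outer iterations yields exactly the stated running time. Finally I would invoke the rounding step of \citet{jansen2006preemptiveresource}, which turns the near-feasible fractional snapshot solution into an actual preemptive schedule of makespan $(1+O(\varepsilon))T^{*}$ without asymptotic overhead, and rescale $\varepsilon$ by a constant.

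The main obstacle — and really the only place where care is needed — is the error bookkeeping rather than any new construction. One must verify that an \emph{approximate} block solver (our FPTAS, not an exact separation/optimization oracle) still drives the logarithmic-potential method to a $(1+\varepsilon)$-approximate LP optimum, and that the three error sources — the outer binary search on $T$, the slack internal to the LP solver, and the integral rounding — compose to a single $(1+\varepsilon)$ guarantee. Since the proposition is stated for a generic $K$KP running time $\mathcal{T}(n,m,1/\varepsilon)$, this analysis is exactly that of \citet{jansen2006preemptiveresource} and may be quoted verbatim; the only new observation is that plugging our $\widetilde{O}(n+z^{2}/\varepsilon^{2})$ scheme in for $\mathcal{T}$ immediately improves their time bound.
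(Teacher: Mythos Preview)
The paper does not prove this proposition at all: it is stated with attribution to \citet{jansen2006preemptiveresource} and treated as a black box (``According to~\citep{jansen2006preemptiveresource}, the following proposition holds''), with no argument beyond the one-line observation that the block subproblem in their LP is a $K$-item knapsack instance. Your proposal goes considerably further than the paper by actually sketching the configuration-LP / price-directive framework, identifying where each factor in the running time comes from (the $\log(1/\varepsilon)$ binary search, the $n(1/\varepsilon^{2}+\log n)$ iteration count, the $\mathcal{T}(n,m,1/\varepsilon)+n\log\log(n/\varepsilon)$ per-iteration cost), and flagging the approximate-oracle error composition as the only delicate point. This is a reasonable reconstruction of what the cited reference does, and is more than the present paper requires; for the purposes of matching the paper you could simply cite \citet{jansen2006preemptiveresource} and move on.
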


Note that the complexity term in Proposition~\ref{schedulingcom} is proportional	to $\mathcal{T}(n,m,1/\varepsilon)+n\log\log(n/\varepsilon)$. In most parameter regimes, it is dominated by $\mathcal{T}(n,m,1/\varepsilon)$, the complexity of $K$-item knapsack problem. Roughly speaking, the complexity reduction achieving in $P|res1,\ldots,pmtn|C_{\max}$ is in the same order as the improvement obtained in $K$KP.

\section{Application in network caching}\label{applicationsub}

As the Internet traffic is dominated by popular contents (e.g., YouTube, Netflix videos) and the price of storage gets cheaper, recent Internet architectures such as Content-Centric Networking (CCN) suggest storing popular contents in network caches or routers, which could significantly reduce network congestion~\citep{jacobson2007content}. One problem arising is {\em how to choose files to store in a network cache to maximize the hit ratio (\ie, the probability that a requested file is stored in the cache)}. Let $f(S)$ denote the required storage size to store the chosen file set $S$ and $g(S)$ represent the cache miss probability for the chosen files. We consider the following cardinality constrained minimization problem, motivated by this file selection problem in network caching~\citep{meyer2012study,nam2017synccoding}, 
\begin{align}
\min_{S\subseteq E}F(S)&=f(S)+g(S),\label{subminpro}\\
s.t.\; |S|&=K, \label{submcard}
\end{align}
where function $F(\cdot):2^{E}\rightarrow \mathbbm{R}^{+}$ is the summation of $f(\cdot):2^{E}\rightarrow \mathbbm{R}^{+}$ and $g(\cdot):2^{E}\rightarrow \mathbbm{R}^{+}$, which is non-negative and additive. We remark that $f(\cdot)$ satisfies the marginal decreasing property, since we assume that caches compress the files to maximize the remaining storage, the compression efficiency\footnote{The compression efficiency is the ratio between the compressed size and the sum of original file sizes~\citep{nam2017synccoding}.} increases as more files are compressed together~\citep{nam2017synccoding}. $g(\cdot)$ is a modular function since the cache miss probability is simply the sum of the hit probabilities of the uncached files.

The formal definition of a submodular function is given as following.
\begin{definition}[Submodular function]
Set function $f(\cdot):2^{E} \rightarrow \mathbb{R}^{+}$ is \emph{submodular} if for all subsets $S,T\subseteq E$, inequality $f(S)+f(T)\geq f(S\cup T)+f(S\cap T)$ holds. $f(\cdot)$ is monotone non-decreasing if $f(S)\geq f(T)$ holds for $\forall T\subseteq S$.	
\end{definition}

The formulation of (\ref{subminpro})-(\ref{submcard}) might be not that interesting in the context of submodular optimization, as the non-increasing property of $g(\cdot)$ is rather artificial. In addition, the problem is closely related to the problem of minimizing the difference between submodular functions~\citep{10.5555/3020652.3020697}, and submodular cost submodular cover (SCSK) problem~\citep{iyer2013submodular} when the constraint function is additive, while we have an additional cardinality constraint.  Here we present an alternative approach that is standard to a certain degree, but is more straightforward. Our motivation is to show the potential application of E-$K$KP.

%	

%Several techniques such as deduplication (used in Dropbox)~\citep{meyer2012study} and SyncCoding~\citep{nam2017synccoding} enable caches to compress and decompress files instantly with low complexity. 	
	
%	%In the following, we present a near-optimal algorithms in which the solution to E-$K$KP plays an important role. 
	
\subsection{A near-optimal algorithm}
We present a near-optimal algorithm in which the solution to E-$K$KP plays an important role. One important ingredient in the algorithm is the \emph{ellipsoid approximation}~\citep{goemans2009approximating} of a monotone submodular function.
	
\begin{definition}[Ellipsoid relaxation~\citep{goemans2009approximating}]
For any monotone submodular function $f(\cdot)$, we can construct a function $f^{\natural}(\cdot):2^{E}\rightarrow \mathbb{R}^{+}$ that approximates $f(\cdot)$ by a factor of $\alpha(n)=O(\sqrt{n}\log n)$, by issuing polynomial number of queries to $f(\cdot)$, \ie, $f^{\natural}(S)  \leq f(S) \leq \alpha(n)\cdot f^{\natural}(S)$.  Moreover, there exist $c_{e}>0$ such that $f^{\natural}(S)=\sqrt{\sum_{e\in S}{c_{e}}}$.
\end{definition}	
Owing to the simple form of $f^{\natural}$, we can reduce the problem to E-$K$KP.
	
\paragraph{Reduction to E-$K$KP.} Note that $g(\cdot)$ can represented as a constant minus a monotone non-decreasing modular function, \ie, there exists a constant $C$ and $\bar{g}(\cdot): 2^{E}\rightarrow \mathbb{R}^{+}$, such that $g(\cdot)=C-\bar{g}(\cdot)$. Consider the following problem with budget $\omega$ and cardinality bound $K$,
\begin{align}
&\max{\bar{g}(S)}\label{ekkpequiv1}\\
s.t.\; & {f^{\natural}}^{2}(S)  \leq \omega \label{softcon}\\
&|S|=K\label{ekkpequiv3}
\end{align}
where constraint (\ref{softcon}) is soft. It is equivalent to E-$K$KP because $f^{\natural}(S)(\cdot)$ is the square root of an additive function. We approximately solve problem (\ref{ekkpequiv1}) for every $\omega\in \{L^{\natural}(1+\varepsilon)^{i}|i\geq 0\}\cap [L^{\natural}, U^{\natural}]$, where $L^{\natural}=\min\{\sum_{e\in S}g(e)||S|=K\}$, $U^{\natural}=\max\{\sum_{e\in S}g(e)||S|=K\}$, then we use the FPTAS for E-$K$KP to obtain solution $S_{\omega}$.  Among all the solutions obtained, the final solution $S^{*}$ is chosen as the set with smallest objective value among all the $S_{\omega}$. In this section we focus on the scenario when the following assumption holds.
\begin{assumption}\label{assumption}
$\max_{e\in E}g(e)/\min_{e\in E}g(e)=$\poly$(n)$
\end{assumption}
Indeed, $[L^{\natural}, U^{\natural}]$ can be replaced by any interval $[L, U]$, such that $U/L=$poly$(n)$ and $g(O^{*})\in [L, U]$.

%maximum value of $g^{\prime}(S)$ is larger than $(1-\varepsilon)(C-B)$.

%	
%	%solution of (\ref{ekkpequiv1})-(\ref{ekkpequiv3}) 
%	%
%	%\begin{proposition}
%	%There exists an $\Omega(\frac{1-\varepsilon}{\sqrt n \log n})$-approximate polynomial time algorithm, and the best possible approximation ratio for any polynomial time algorithm is also in the order of $\Theta(\frac{1-\varepsilon}{\sqrt n \log n})$
%	%\end{proposition}
%	%\begin{proof}
%	%%Utilizing the similar analysis in section~\ref{simalgo}, we know that it suffices to solve the 
%	%Consider the following problem with any given value of $K$ and $B$:
%	%
%	%As a consequence we design the following algorithm.???
%	%\end{proof}
%	
%	
%	%\begin{align}
%	% \min_{S\subseteq E}F(S)&=f(S)+g(S),\label{subminpro}\\
%	%s.t. |S|&=K,
%	%\end{align}
	
\paragraph{Performance analysis.}
Consider the iteration when parameter $\omega=\omega^{*}$ satisfies that 
\begin{align*}
{f^{\natural}}^{2}(O^{*})\in [(1-\varepsilon)\cdot \omega^{*}, \omega^{*}],
\end{align*}
the corresponding solution $S_{\omega^{*}}$ returned by the FPTAS satisfies that
\begin{align}\label{ineqg}
\bar{g}(S_{\omega^{*}})\geq (1-\varepsilon) \cdot \bar{g}(O^{*}),
\end{align}
and $g(S_{\omega^{*}})=C-\bar{g}(S_{\omega^{*}})\leq g(O^{*})+\varepsilon \cdot \bar{g}(O^{*})$. Under Assumption \ref{assumption}, we can obtain $g(S_{\omega^{*}})\leq (1+\varepsilon) \cdot g(O^{*})$ by replacing $\varepsilon$ by $\varepsilon/$poly$(n)$. In addition,
\begin{align}\label{ineqf}
f(S_{\omega^{*}})\leq \alpha(n) \cdot f^{\natural}(S_{\omega^{*}}) \leq \alpha(n) \cdot \sqrt{\omega^{*}}\leq  \alpha(n) \cdot \frac{f^{\natural}(O^{*}) }{\sqrt{1-\varepsilon}}\leq \frac{\alpha(n)}{\sqrt{1-\varepsilon}}\cdot f(O^{*})\leq  \frac{\alpha(n)}{1-\varepsilon}\cdot f(O^{*}).
\end{align} 
Consequently, we know that
\begin{align*}
\frac{F(O^{*})}{F(S^{*})}&\geq \frac{F(O^{*})}{F(S_{\omega^{*}})}
=\frac{f(O^{*})+g(O^{*})}{f(S_{\omega^{*}})+g(S_{\omega^{*}})} \overset{(a)}{\geq} \Big(1-\varepsilon \Big) \frac{f(S_{\omega^{*}})/\alpha(n)+g(S_{\omega^{*}})}{f(S_{\omega^{*}})+g(S_{\omega^{*}})}\\
&\overset{(b)}{\geq} \Big(1-\varepsilon \Big)\Big[1-\Big(1-\frac{1}{\alpha(n)}\Big)\frac{1}{1+\eta} \Big],
\end{align*}
where the first inequality follows from (\ref{ineqg}) and (\ref{ineqf}), parameter $\eta=\min_{S:S\subseteq E}{\{g(S)/f(S)\}}$ denotes the minimum ratio between $f$ and $g$, hence $g(S_{\omega^{*}})\geq \eta \cdot f(S_{\omega^{*}})$ and $(b)$ follows. Intuitively the difficulty of problem~(\ref{subminpro}) is related to $\eta$. More specifically, when $\eta$ increases, the problem becomes easier since the proportion of the modular function increases.
\paragraph{Approximation ratio lower bound.} For problem (\ref{subminpro}), the following approximation ratio lower bound is implied by~\citep{goemans2009approximating,svitkina2008submodular,iyer2013submodular}.
\begin{proposition}[\citep{goemans2009approximating,svitkina2008submodular,iyer2013submodular}]\label{sublowerbound}
Given a submodular function $f$ and modular function $g$, no polynomial time algorithm can achieve approximation ratio better than $1-\Big(1-\sqrt{\frac{\log n}{n}} \Big)\frac{1}{1+\eta}$, where $\eta=\min_{S:S\subseteq E}{\{g(S)/f(S)\}}$.
\end{proposition}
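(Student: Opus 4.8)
The plan is to prove the bound unconditionally in the value--oracle model in which $f$ and $g$ are accessed (this is the model of \citet{goemans2009approximating} and \citet{svitkina2008submodular}, and it implies the stated polynomial--time hardness); the argument reduces from the query--complexity lower bound for cardinality--constrained monotone submodular minimization and then augments it with a modular term, the latter being exactly what produces the $\tfrac{1}{1+\eta}$ dilution, as in \citet{iyer2013submodular}. By Yao's principle it suffices to give a distribution over instances on which every deterministic poly--query algorithm is, in expectation, far from optimal.

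The hard family is the standard planted--subset gadget. Fix $\alpha=\lceil\sqrt{n\log n}\rceil$, $\beta=\Theta(\log n)$ (a large enough constant times $\log n$), cardinality bound $K=\alpha$, and let $A\subseteq E$ be uniformly random with $|A|=\alpha$. Take
\[
f_A(S)=\min\bigl\{\,|S\setminus A|+\min\{|S\cap A|,\beta\},\ \alpha\,\bigr\},
\]
which is monotone submodular (a truncation of a sum of a modular function and a truncated modular function), and the modular cost $g(S)=c\,|S|$ for a parameter $c\ge 0$ fixed later. For a query $S$ chosen before $A$ is drawn, $|S\cap A|$ is hypergeometric with mean $|S|\,\alpha/n\le\alpha^2/n=\log n$, so a Chernoff bound gives $\Pr[\,|S\cap A|>\beta\,]\le n^{-\omega(1)}$; on the complementary event one checks $f_A(S)=\min\{|S|,\alpha\}$ (immediate when $|S|\le\alpha$, and both sides are capped at $\alpha$ when $|S|>\alpha$). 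A union bound over the polynomially many adaptive queries of a fixed deterministic algorithm shows that, with probability $1-o(1)$ over $A$, the algorithm sees only the $A$--oblivious values $\min\{|S|,\alpha\}$ and hence outputs the fixed set $S_{\mathrm{ALG}}$ it would produce from those answers; since $|S_{\mathrm{ALG}}\cap A|\le\beta$ w.h.p., $f_A(S_{\mathrm{ALG}})=\alpha$, so its objective is $\alpha+cK$, whereas $S^\star=A$ is feasible with $f_A(A)=\beta$, giving $\mathrm{OPT}=\beta+cK$.

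Combining the two, for a $1-o(1)$ fraction of the planted instances the ratio achieved is at most
\[
\frac{\beta+cK}{\alpha+cK}=\frac{\beta/\alpha+c}{1+c},
\]
and at most $1$ on the rest, so the expected ratio over random $A$ is this quantity plus $o(1)$. Now choose $c=\eta$: since $f_A(S)\le|S|$ always, with equality at singletons, $\min_{S\subseteq E}g(S)/f_A(S)=c$, so the instance realizes the given $\eta$; and with the parameters above $\beta/\alpha=(1+o(1))\sqrt{\log n/n}$, whence the ratio is $\frac{\sqrt{\log n/n}+\eta}{1+\eta}=1-\bigl(1-\sqrt{\log n/n}\bigr)\frac{1}{1+\eta}$. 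Letting $\eta$ range over $[0,\infty)$ covers all cases, and Yao's principle carries the bound to randomized algorithms.

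The main obstacle is the indistinguishability step together with the balancing of $\alpha,\beta,K$: they must simultaneously keep $f_A$ monotone submodular, make the planted optimum a factor $\approx\sqrt{n/\log n}$ cheaper than the generic feasible value, and keep the hypergeometric fluctuation of $|S\cap A|$ below the threshold $\beta$ across a whole polynomial's worth of queries---competing requirements that are precisely what force $\alpha=\Theta(\sqrt{n\log n})$, $\beta=\Theta(\log n)$, and hence where the $\sqrt{\log n/n}$ originates. Minor care is also needed for the equality constraint $|S|=K$ (handled by padding outside $A$ using monotonicity) and for the fact that $\eta$ is a minimum over \emph{all} subsets, not just feasible ones (handled by $f_A(S)\le|S|$). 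For a cleaner exposition one may instead invoke the gadget of \citet{svitkina2008submodular,goemans2009approximating} verbatim and only verify the $F=f+g$ bookkeeping and the $\eta$--dilution arithmetic.
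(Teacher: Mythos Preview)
The paper does not actually prove this proposition: it is stated as a direct consequence of \citet{goemans2009approximating}, \citet{svitkina2008submodular}, and \citet{iyer2013submodular}, with no argument given. So there is nothing in the paper to compare against; you have supplied what the paper omits.

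Your reconstruction is the right one and matches the cited sources: the planted-subset gadget $f_A(S)=\min\{|S|,\,|S\setminus A|+\beta,\,\alpha\}$ of Svitkina--Fleischer with $\alpha=\Theta(\sqrt{n\log n})$, $\beta=\Theta(\log n)$, $K=\alpha$, combined with a modular $g(S)=c|S|$ to realize the prescribed $\eta$, and the arithmetic $(\beta/\alpha+\eta)/(1+\eta)=1-(1-\sqrt{\log n/n})/(1+\eta)$ is exactly the $\eta$-dilution of \citet{iyer2013submodular}. The identification $\min_S g(S)/f_A(S)=c$ via singletons is correct, and the handling of the equality constraint by padding is fine since $f_A$ is monotone.

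One small sloppiness to tighten: you write that the hypergeometric mean satisfies $|S|\alpha/n\le\alpha^2/n$, which presumes $|S|\le\alpha$ for every query, but queries can have any size. The fix is standard and you partly allude to it: the ``informative'' event is $|S\setminus A|+\beta<\min\{|S|,\alpha\}$, and for $|S|>\alpha$ this forces $|S\cap A|>|S|-\alpha+\beta$, a threshold that grows with $|S|$ while the mean grows only like $|S|\alpha/n$; a case split at $|S|\lessgtr\alpha$ together with $|S\cap A|\le|A|=\alpha$ recovers the $n^{-\omega(1)}$ tail uniformly. With that adjustment the argument is complete.
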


\end{document}